\definecolor{corlinks}{RGB}{200,0,0}
\theoremstyle{plain}
\newtheorem{theorem}{Theorem}
\newtheorem{lemma}{Lemma}
\theoremstyle{definition}
\newtheorem{definition}{Definition}
\theoremstyle{remark}
\newtheorem{claim}{Claim}
\definecolor{myred}{rgb}{0.64, 0.0, 0.0}
\newcommand{\eps}{\varepsilon}
\DeclareMathOperator{\supp}{supp}
\DeclareMathOperator{\pluck}{pluck}
\newcommand{\PM}{\textsc{Match}}
\newcommand{\Odd}{\textsc{Odd}}
\definecolor{darkred}{rgb}{0.55, 0.0, 0.0}
\colorlet{darkdarkred}{darkred!70!black}
\definecolor{bisque}{rgb}{1.0, 0.84, 0.69}
\colorlet{darkbisque}{bisque!70!black}
\definecolor{burntumber}{rgb}{0.54, 0.2, 0.14}
\definecolor{sangria}{rgb}{0.57, 0.0, 0.04}
\newcommand{\numletter}[1]{\@alph{#1}}
\tikzset{
    >=latex,
    graph-vert/.style 2 args = {
        draw,
        color = #1!70!black,
        circle,
        thick,
        inner sep = 0pt,
        minimum size = #2,
        fill = #1!50,
        fill opacity = 0.5
    },
    graph-vert/.default = {blue}{0.2cm},
    triangle/.style 2 args = {
        #1!70!black,
        fill = #1!30,
        fill opacity = #2
    },
    universe-rect/.pic = {
        \fill[black!10, rounded corners = 3pt] (-0.1, 0.1) rectangle (#1 + 0.1, -#1 - 0.1);
        \fill[white] (0, 0) rectangle (#1, -#1);
        \draw[step = 0.2, black!10] (0, 0) grid (#1, -#1);
        \draw[thick] (-0, 0) rectangle (#1, -#1);
    },
    pics/shaded-area/.style n args = {3}{
        code = {
            \draw[#1, opacity = 0.2, fill, fill opacity = 0.1, thick]
                (-0.2, 0) arc (-180:-90:0.2) -- ++(#2, 0) arc (-90:0:0.2) -- ++(0, #3) arc (0:90:0.2)
                -- ++(-#2, 0) arc (90:180:0.2) -- cycle;
        }
    },
    vert/.style = {
        draw,
        thick,
        rectangle,
        rounded corners = 2pt,
%        ellipse
    },
    semisim/.style = {
        ->,
        blue,
        dashed,
        decorate,
        decoration = {
            snake,
            amplitude = 0.5,
            segment length = 2
        }
    }
}
\newcommand{\complexityfont}[1]{\mathsf{#1}}
\def\RNC          {\complexityfont{RNC}}
\def\P            {\complexityfont{P}}
\def\AC           {\complexityfont{AC}}
\def\pL           {\oplus\L}
\def\NC           {\complexityfont{NC}}
\def\SPL          {\complexityfont{SPL}}
\renewcommand{\L} {\complexityfont{L}}
\def\cald{{\mathcal D}}
\def\calg{{\mathcal G}}
\def\calm{{\mathcal M}}
\def\calp{{\mathcal P}}
\def\cals{{\mathcal S}}
\def\calv{{\mathcal V}}
\title{
\huge Monotone Circuit Complexity of Matching
}
\author{
\begin{tabular}{ccc}
Bruno Cavalar &
Mika G\"o\"os&
Artur Riazanov \\[-1mm]
\small\slshape University of Oxford &
\small\slshape EPFL &
\small\slshape EPFL 
\end{tabular}\\[1em]  
\begin{tabular}{cc}
Anastasia Sofronova &
Dmitry Sokolov \\[-1mm]
\small\slshape EPFL &
\small\slshape EPFL \& Universit{\'{e}} de Montr{\'{e}}al
\end{tabular}  
}
\date{\today}
\begin{document}

\maketitle

\begin{abstract}
\noindent
We show that the perfect matching function on $n$-vertex graphs requires monotone circuits of size $\smash{2^{n^{\Omega(1)}}}$. This improves on the 
$n^{\Omega(\log n)}$
lower bound  of Razborov (1985). Our proof uses the standard approximation method together with a new sunflower lemma for matchings.
\end{abstract}

\section{Introduction}

A sobering lesson learned already in the 1980s~\cite{Razborov1985,Tardos1988} is that general boolean circuits (using gates $\land$, $\lor$, $\neg$) can be much more powerful than monotone circuits (using gates~$\land$,~$\lor$). The earliest demonstration is due to Razborov~\cite{Razborov1985}. He considered the \emph{bipartite perfect matching} function $\PM\colon\{0,1\}^{n^2}\to\{0,1\}$ that takes as input a bipartite graph, represented by its adjacency matrix $x\in\{0,1\}^{n\times n}$, and outputs $\PM(x)= 1$ iff the graph contains a perfect matching. While bipartite matching famously admits polynomial-size circuits, Razborov showed that it requires monotone circuits of size $n^{\Omega(\log n)}$. Since then, a long-standing challenge has been to determine whether Razborov's quasi-polynomial bound is tight (e.g., see textbooks~\cite{Wegener1987,Jukna2012,Wigderson2019}). Our main result is to improve the lower bound to an exponential one.
\begin{theorem} \label{thm:main}
$\PM$ requires monotone circuits of size at least $2^{n^{1/3-o(1)}}$.
\end{theorem}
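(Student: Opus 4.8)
The plan is to run Razborov's approximation method against a supposedly small monotone circuit $C$ computing $\PM$, but with \emph{monotone DNF approximators whose terms are matchings} of bounded size, and to drive the whole argument with a new "sunflower lemma for matchings" in place of the Erd\H{o}s--Rado lemma. An approximator is $f_{\calf}(x)=\bigvee_{M\in\calf}[x\supseteq M]$ for a family $\calf$ of matchings, each with at most $\ell$ edges and $|\calf|\le L$; input gates $x_e$ get the family $\{\{e\}\}$. For an $\lor$-gate we take the union of the two children's families and then \emph{pluck}; for an $\land$-gate we form all pairwise unions of terms that are again matchings, discard those with more than $\ell$ edges, and then pluck. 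Here ``pluck'' means: while the family has more than $L$ members, extract a matching-sunflower and replace its members by their common core.

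Next I would fix the two test distributions. For positive inputs, $\mathcal{D}^{+}$ is a uniformly random perfect matching (permutation matrix), so that a fixed matching $M$ with $m$ edges is contained in a sample with probability $(n-m)!/n!$. For negative inputs, $\mathcal{D}^{-}$ is a random maximal matching-free graph, i.e.\ the graph of all edges dominated by a random K\"onig cover of size $n-1$ (for instance induced by a random near-balanced $(n-1)$-coloring of one side, the precise distribution to be tuned). The two properties I need are: (i) the clean count above for $\mathcal{D}^{+}$; and (ii) for nested matchings $Y\subseteq M$, the $\mathcal{D}^{-}$-probability that the sample contains $Y$ but \emph{not} $M$ decays only mildly in $|M\setminus Y|$, so that a union bound over many disjoint petals is effective. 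The combinatorial heart is then the \textbf{matching sunflower lemma}: any family of more than $L=L(\ell,p)$ matchings with at most $\ell$ edges contains $p$ of them, $M_1,\dots,M_p$, together with a core matching $Y$ such that $M_i\cap M_j=Y$ for all $i\ne j$ \emph{and} the petals $M_i\setminus Y$ are pairwise vertex-disjoint (a ``robust'' sunflower of matchings); vertex-disjointness of petals, not guaranteed by the plain sunflower lemma, is exactly what makes property (ii) usable.

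With this in place the error analysis splits cleanly. On the positive side, plucking never rejects a perfect matching it previously accepted (the core is a sub-matching of every removed term), and $\lor$-gates introduce no positive error, so positive errors arise only from the $\land$-gate truncation; each such error only kills perfect matchings that extend some over-large union of two terms, so at most $L^{2}(n-\ell)!$ perfect matchings are lost per gate. On the negative side, a single plucking step can cause a negative sample to be wrongly accepted only if the sample contains the core $Y$ but none of $M_1,\dots,M_p$; by vertex-disjointness of the petals together with property (ii) this happens for at most a $\beta^{p}$-fraction of $\mathcal{D}^{-}$ for some fixed $\beta<1$, and there are at most $L^{2}$ plucks per gate. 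Summing over the $s$ gates, the final approximator $f_{\calf}$ rejects at most $sL^{2}(n-\ell)!$ perfect matchings and wrongly accepts at most an $sL^{2}\beta^{p}$-fraction of negatives; choosing $p$ so that $sL^{2}\beta^{p}<1$ forces $f_{\calf}$ to accept \emph{no} negative input, which in turn forces every term of $\calf$ to have close to $\ell$ edges (a short term accepts too many negatives), whence $f_{\calf}$ accepts at most about $L(n-\ell)!$ perfect matchings. Since $C$ is exact, this means the circuit's own approximator must have lost $\gtrsim n!$ perfect matchings, so $sL^{2}(n-\ell)!\gtrsim n!$, i.e.\ $s$ is roughly $(n/L^{2})^{\ell}$; optimizing $\ell$ and $p$ against the sunflower threshold $L=L(\ell,p)$ and the decay rate $\beta$ yields $s\ge 2^{\,n^{1/3-o(1)}}$.

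I expect the main obstacle to be the robust matching sunflower lemma of the previous paragraph, both to prove and to prove with good parameters. The classical sunflower lemma produces only edge-disjoint petals, which may crowd onto a small vertex set and thereby destroy the near-independence needed for the negative-side union bound (ii); extracting petals that are genuinely vertex-disjoint (or otherwise ``spread'') while keeping the family-size threshold $L(\ell,p)$ small enough that $L$ does not swamp the $n^{\ell}$ gain in Step~5 is the crux. It is precisely the strength one can squeeze out of this lemma that pins the final exponent to $1/3$ rather than something larger, so a sharper matching-sunflower bound would be the natural route to any future improvement.
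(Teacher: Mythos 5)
Your overall framework (DNF approximators whose terms are bounded-size matchings, plucking at each gate, uniform perfect matchings as positive tests and a coloring/cover-based negative distribution) is the same as the paper's, but the key lemma you rely on does not deliver the claimed parameters, and this is exactly where the exponential bound would be lost. You ask for a \emph{structural} sunflower: $p$ matchings with common pairwise intersection $Y$ and pairwise vertex-disjoint petals, and you then claim the negative-side error per pluck is $\beta^{p}$ for a \emph{fixed} constant $\beta<1$. That constant is not fixed. For any negative distribution of the odd-cut type (or your K\"onig-cover variant), the probability that one particular petal with $k\le\ell$ edges is fully contained in the sample is of order $2^{-k}$, exponentially small in the petal size; vertex-disjointness buys you (near-)independence of the $p$ petal events, but each succeeds with probability only about $2^{-\ell}$, so the failure probability is $(1-2^{-\Theta(\ell)})^{p}$, i.e.\ $\beta=1-2^{-\Theta(\ell)}$. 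To beat the union bound over $\mathrm{size}(C)\cdot L^{2}$ plucks you are then forced to take $p\approx 2^{\Theta(\ell)}\cdot w$, and any sunflower threshold of Erd\H{o}s--Rado type satisfies $L(\ell,p)\gtrsim p^{\ell}\approx 2^{\Theta(\ell^{2})}$. Optimizing $\ell$ against this threshold collapses the final bound to $n^{\Theta(\log n)}$ --- this is precisely Razborov's original argument and his quasi-polynomial bound (the paper notes his implicit threshold $4^{\ell^{2}}(c\ell\log(1/\eps))^{2\ell}$), not $2^{n^{1/3-o(1)}}$.

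The missing idea is to abandon the structural, vertex-disjoint formulation altogether and use a \emph{robust} (probabilistic) sunflower notion defined directly against the negative distribution: a subfamily $\calm'$ with core $K$ such that with probability $1-\eps$ over the random odd coloring, \emph{some} member has all its non-core edges monochromatic (condition \eqref{eq:match-sunflower}). The paper proves that every family of $\ell$-matchings of size $(c\ell\log^{2}(\ell/\eps))^{\ell}$ --- only $\ell^{O(\ell)}$, not $2^{\Theta(\ell^{2})}$ --- contains such a sunflower, by a reduction: randomly partition the vertices into $\ell$ blocks and keep only matchings consistent with the partition (losing a factor $\ell^{\ell}2^{O(\ell)}$), pass from matchings to their $2\ell$-element vertex sets, apply the Alweiss--Lov\'ett--Wu--Zhang/Rao robust sunflower lemma to those sets, and then translate the robust condition on uniform vertex assignments back to monochromatic edges via a color-flipping argument for the core vertices that are alone in their block. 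With that threshold, $r\le O(w^{3}\log^{2}n)$ for terms of width $O(w)$, and the standard accounting gives $\log^{3}S\cdot\log^{2}n\ge\Omega(n)$, hence $S\ge 2^{n^{1/3-o(1)}}$. So your instinct that the sunflower lemma is the crux is right, but the specific lemma you propose cannot be strong enough in the parameters that matter; the robust reformulation, not sharper disjointness, is what unlocks the exponential bound.
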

That is, for bipartite matching, the gap between the general and monotone circuit complexities is exponential. In fact, such an exponential gap was already known for a different monotone function in class $\P$ due to Tardos~\cite{Tardos1988}. Her function is relatively complex, however, as it is computed by solving a semidefinite program. Meanwhile, bipartite matching admits an efficient parallel algorithm (class $\RNC$)~\cite{Lovasz1979,Mulmuley1987}, which is not known for Tardos's function.

Another serious contender for exhibiting the strongest general-vs-monotone
separation is \emph{$\mathbb{Z}_2$-satisfiability}~\cite{Goos2019}. This is
a monotone function encoding the problem ``Given a system of linear
equations over $\mathbb{Z}_2$, is it satisfiable?'' It is complete for the class $\pL$ of problems computed by uniform polynomial-size parity branching programs~\cite{Damm1990}. Yet, $\mathbb{Z}_2$-satisfiability was shown to require exponential-size monotone circuits by~\cite{Garg2020,Goos2019}. Bipartite matching is not known to be comparable to $\mathbb{Z}_2$-satisfiability under deterministic reductions. However, \emph{non-uniformly}, bipartite matching lies in a subclass~$\SPL\subseteq\pL$~\cite{Allender1999} and hence is arguably simpler than $\mathbb{Z}_2$-satisfiability.

In fact, our proof of \cref{thm:main} can be extended further to prove a lower bound for a function even
simpler than bipartite matching, called \emph{odd factor}~\cite{Babai1999}. This function is defined
by~$\Odd(x)= 1$ iff the graph $x\in\{0,1\}^{n\times n}$ contains a spanning subgraph whose degrees are
all odd. Equivalently,~$\Odd(x)= 1$ iff every connected component of $x$ has even size. This function can
be computed in logarithmic space (class $\L$) using Reingold's algorithm~\cite{Reingold2008}. We show
monotone lower bounds for $\Odd$ as well as ``padded'' versions of it that can be computed by one of the
simplest of all circuit models: constant-depth circuits (class $\AC^0$).
\begin{theorem}
    \label{thm:l}
    There is a monotone $\Odd\in\L$ with monotone circuit complexity $2^{n^{\Omega(1)}}$.
\end{theorem}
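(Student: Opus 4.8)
The plan is to (i) fix a formalization of $\Odd$ that is both monotone and computable in logarithmic space, and (ii) obtain the lower bound by rerunning the approximation-method proof of \cref{thm:main}, reusing the sunflower lemma for matchings on the positive side and redoing only the negative-side error analysis. I would take $\Odd$ on $n$-vertex graphs, encoded by the adjacency matrix $x\in\{0,1\}^{n\times n}$ (with $n$ even; for odd $n$ the function is identically $0$), to be ``$x$ contains a spanning subgraph all of whose degrees are odd.'' This is manifestly monotone --- enlarging $x$ preserves any such subgraph --- and it is equivalent to ``every connected component of $x$ has even order,'' the standard $T$-join criterion for $T=V$. Membership in $\L$ then follows by using Reingold's undirected-connectivity algorithm as a subroutine: a logspace machine can, for each vertex $v$, test whether $v$ is the least-indexed vertex of its component and, if so, count the vertices reachable from $v$ and check that this count is even, accepting iff every such count is even.

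For the lower bound, the crucial point is that the positive side of the argument needs essentially no change: every perfect matching of the complete graph $K_n$ is a $1$-input of $\Odd$, since all of its components are single edges. So I would run the approximation method with the positive test distribution being a uniformly random perfect matching of $K_n$, and with partial matchings as the building blocks of the approximators; the sunflower lemma for matchings and the accompanying plucking and error-accumulation arguments apply verbatim, the only cosmetic change being the passage from the complete bipartite graph $K_{n,n}$ to $K_n$. The negative side is where the function-specific work lies. The natural family of negative test inputs for $\Odd$ consists of the complements of complete bipartite graphs $K_{A,B}$ over bipartitions $V=A\sqcup B$ with $|A|$ (hence $|B|$) odd and $|A|\approx n/2$: such a graph has exactly the two components $A$ and $B$, both of odd order, so $\Odd=0$, whereas inserting any $A$--$B$ edge merges them into a single component of even order $n$, giving $\Odd=1$, so these are maxterms. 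Taking the negative test distribution to be a uniformly random such bipartition, one re-derives the estimate that an approximator built from short partial matchings must disagree with $\Odd$ on a large fraction of these inputs, using that a partial matching $M$ is consistent with the complement of $K_{A,B}$ only if no edge of $M$ crosses the cut $(A,B)$ --- an event of probability exponentially small in $|M|$ for a near-balanced random cut --- exactly mirroring the corresponding bound for the Hall-violator maxterms of $\PM$.

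The main obstacle is precisely this recalculation of the negative-side error bound: one has to verify that replacing the $\PM$-maxterms (pairs $S\subseteq U$, $T\subseteq V$ with $|S|+|T|=n+1$) by the $\Odd$-maxterms (parity-constrained near-balanced bipartitions of the whole vertex set) perturbs the relevant probabilities and union bounds only by lower-order factors --- the parity conditioning on $|A|$ costs a constant factor, and the move from ``two small sides'' to ``a genuine bipartition of $V$'' is absorbed by the same counting --- so that the quantitative trade-off between approximator size and error rate, and hence the $2^{n^{\Omega(1)}}$ bound, goes through. I would also remark that, since $\Odd\in\L$ while $\PM$ is not known to be in $\L$, no efficient monotone reduction from $\PM$ to $\Odd$ is to be expected, and so the lower bound genuinely has to be obtained by adapting the proof rather than by a reduction.
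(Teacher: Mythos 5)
Your proposal misses the route the paper actually takes, which is a two-line observation on top of \cref{thm:main}: the entire argument of \cref{sec:approximation} only ever uses that the circuit is correct on $\supp(\cald)$, and $\PM(x)=\Odd(x)$ for every $x\in\supp(\cald)$ --- a perfect matching has all components of size $2$, while an odd-cut input from \cref{def:odd-cut} is a union of two odd-sized bicliques (or odd sets of isolated vertices) --- so the $2^{n^{1/3-o(1)}}$ lower bound applies \emph{verbatim} to the bipartite odd-factor function, and the only new ingredient is $\Odd\in\L$ via Reingold, which you do supply. Your closing remark that the bound ``genuinely has to be obtained by adapting the proof rather than by a reduction'' is therefore mistaken: no adaptation and no reduction are needed, because the two functions coincide on the test distribution.

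Your alternative plan (rerun the method for odd factor on $K_n$, with negatives being unions of two odd cliques) could be made to work, but the sketch inverts the roles of the two distributions, and the step you yourself single out as the main obstacle is not the right one and would fail as stated. The matching-sunflower condition \eqref{eq:match-sunflower} is a statement \emph{about the negative (cut) distribution}, so it is precisely the part that must be rechecked for your new negatives (it does transfer with cosmetic changes, since \cref{sec:sunflower} never uses bipartiteness); it is not part of a distribution-free ``positive side''. Conversely, the estimate you propose as the essential negative-side work --- a term $t_M$ is consistent with a random two-odd-clique graph only if no edge of $M$ crosses the cut, an event of probability about $2^{-|M|}$ --- cannot carry an anti-correlation lemma: the approximating DNF is allowed $r^\ell$ terms of each size $\ell$ with $r=\Theta(w^3\log^2 n)\gg 2$, so the union bound $\sum_\ell r^\ell 2^{-\ell}$ diverges, and no claim of the form ``small DNFs reject most negative inputs'' is available (nor needed); there are also no ``Hall-violator maxterms'' anywhere in this proof. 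In the paper, \cref{lem:dnfs-are-dumb-intro} is proved on the \emph{positive} side via \cref{claim:match}, $\Pr_{\bm x\sim\cald_1}[t_M(\bm x)=1]\le (e/n)^\ell$, where $r=o(n)$ makes $\sum_\ell r^\ell(e/n)^\ell$ converge, and the negative distribution enters only through the plucking-error analysis (the sunflower property) plus the trivial $\tfrac12$ term. If you insist on redoing the proof for $K_n$, you must keep this allocation and recompute the analogue of \cref{claim:match} for perfect matchings of $K_n$ (easy); as written, your ``main obstacle'' paragraph would not establish what is needed.
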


\begin{theorem}
    \label{thm:ac0}
    For any $k$ there is a monotone $f_k\in\AC^0$ with monotone circuit complexity $n^{\Omega(\log^k n)}$.
\end{theorem}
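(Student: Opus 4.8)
The plan is to obtain $f_k$ as $\Odd$ restricted to a family of graphs carrying a fixed $k$-level hierarchical \emph{template}, together with a polynomial amount of padding. Partition the vertex set into small \emph{level-$1$ cells}, group cells into \emph{level-$2$ cells}, and so on, up to a single level-$k$ cell; inside each cell plant a small \emph{meta-instance} of $\Odd$ whose ``meta-vertices'' are the cell's children and whose ``meta-edges'' are realized by designated coordinates of the input $x\in\{0,1\}^n$. Declare $x$ positive exactly when one can choose, at every cell, an odd factor of that cell's meta-instance consistently along the hierarchy — equivalently, $\Odd$ holds while every connected component of the chosen subgraph stays confined to a cell of its own level. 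Only the coordinates realizing meta-edges are ``live''; the remaining coordinates are padding, introduced so that $n$ is polynomially larger than the number of live coordinates. The resulting $f_k$ is monotone because containment of a feasible choice is upward closed.

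For the upper bound the point is \emph{not} to write down the disjunction of minterms (of which there are super-polynomially many), but to evaluate feasibility by recursion over the template: the feasibility of a cell is an $\mathrm{OR}$, over the polynomially many odd factors of that cell's meta-instance — which is of only logarithmic size — of an $\mathrm{AND}$ of (a) the feasibility of each child and (b) the presence of the live coordinates realizing the selected meta-edges. Each level thus contributes $O(1)$ alternations of unbounded fan-in gates, so stacking the $k$ levels yields a circuit of depth $O(k)$ and size $\poly(n)$; hence $f_k\in\AC^0$ for every fixed $k$. Morally, the hierarchical confinement localizes the only two features of $\Odd$ that are not constant-depth computable — global connectivity and parity of a large component — to within the fixed, logarithmic-size meta-instances, where they are decided by brute force.

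For the lower bound I would run the approximation method together with the matching sunflower lemma underlying \cref{thm:l}, but inductively over the $k$ levels. A minimal positive input to $f_k$ is a ``hierarchical bundle'' of minimal positive inputs to $f_{k-1}$, one per branch of the top-level meta-instance, and a minterm/maxterm analysis should show that a monotone circuit of size $S$ for $f_k$ restricts, on a generic branch, to a monotone circuit of size $S$ for $f_{k-1}$; quantifying how many branches and how many minterms a single approximator gate can absorb — this is exactly where the matchings sunflower lemma is invoked afresh at each level — forces $S$ to grow by a factor $n^{\Omega(\log n)}$ with each additional level. With the base case $k=1$ supplying a Razborov-type $n^{\Omega(\log n)}$ bound for $\Odd$ on the smallest hierarchical instances, the induction yields $n^{\Omega(\log^k n)}$.

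The main obstacle is calibrating the branching parameter $b$ of the meta-instances, the cell sizes, and the padding so that all of the following hold simultaneously: (i) each cell's feasibility is a $\poly(n)$-size, $O(1)$-depth expression, which caps $b$ at polylogarithmic; (ii) at every one of the $k$ levels the sunflower lemma still ``bites'', i.e., enough cells remain uncovered for the inductive step to extract a fresh $\log n$ factor; and (iii) the cumulative padding does not dilute the final bound below $n^{\Omega(\log^k n)}$. Making (ii) rigorous requires a self-reducibility lemma stating that restricting \emph{any} monotone circuit for $f_k$ to a single branch genuinely leaves a monotone circuit for $f_{k-1}$, which in turn needs the minterms and maxterms of $f_k$ to decouple across sibling cells even though $f_k$ is globally a single monotone function — this, I expect, is the delicate part of the construction.
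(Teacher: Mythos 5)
Your proposal takes a genuinely different route from the paper, and the route as described has a real gap. The paper's proof is a short corollary of \cref{thm:l} together with two off-the-shelf facts: (a) the folklore depth-reduction that any $\L$-function on $n$-vertex inputs has an $\AC^0$ circuit of size $2^{n^\eps}$ and depth $O(1/\eps)$ for every constant $\eps>0$, and (b) padding. One sets $\eps\coloneqq 1/(4(k+1))$, $N\coloneqq 2^{n^\eps}$, and $f_k(x,y)\coloneqq\Odd_n(x)$ for $(x,y)\in\{0,1\}^N$. Then $f_k$ has a $\poly(N)$-size, depth-$O(k)$ circuit, while by \cref{thm:l} its monotone complexity is $\exp(n^{1/3-o(1)})=\exp(n^{\Omega(1)})\geq N^{\Omega(\log^k N)}$ (since $\log N=n^\eps$ and $(k+1)\eps<1/3$). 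In other words, once the exponential bound of \cref{thm:l} is in hand, \cref{thm:ac0} requires no fresh lower-bound machinery at all; the entire $\log^k$ improvement comes from re-measuring the bound against the inflated input length $N$.

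Your proposal instead builds a $k$-level hierarchical $\Odd$ gadget and tries to prove the lower bound from scratch by an induction on the levels, invoking the matching sunflower lemma afresh at each level and relying on a ``self-reducibility'' lemma to pass from $f_k$ to $f_{k-1}$. That self-reducibility step is precisely the part you flag as delicate, and it is indeed the crux: there is no argument given that a small monotone circuit for $f_k$ restricts to a small monotone circuit for $f_{k-1}$, nor that each level contributes a multiplicative $\log n$ to the exponent rather than, say, an additive constant. Without that, the induction does not close. Also, your plan does not leverage the exponential bound of \cref{thm:l} at all, even though that bound is what makes the theorem essentially free by padding; as a result you are re-deriving something much harder than necessary. (A secondary caveat on your upper bound: the number of odd factors of a meta-instance on $b$ meta-vertices scales like $2^{\Theta(b^2)}$, not $\poly(b)$, so ``polylogarithmic $b$'' does not by itself give a polynomial-size $\mathrm{OR}$; you would need $b=O(\sqrt{\log n})$ or a more structured meta-instance. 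This is fixable, but the lower-bound gap is not.)
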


In particular, \cref{thm:ac0} resolves an open problem of Grigni and Sipser~\cite{Grigni1992}, who asked
if every monotone function in $\AC^0$ can be computed by a polynomial-size monotone circuit.
\Cref{thm:ac0} also rules out a particular approach to obtaining general circuit lower bounds: the papers~\cite{Chen2022,Cavalar2023} observed that if \Cref{thm:ac0} had turned out to be false, then $\NC^2 \not\subseteq \NC^1$.

\begin{table}[t]
\centering
\renewcommand{\arraystretch}{1}
\newcommand{\pad}{\hskip 1.5em}
\setlength\tabcolsep{.2em}
\begin{tabular}{l @{\pad} l @{\pad} l @{\pad} r}
\toprule[.5mm]
\bf Function & \bf Class & \bf Monotone complexity & \bf Reference \\
\midrule
Bipartite matching & $\RNC$ & $\exp(\Omega(\log^2 n))$ & \cite{Razborov1985} \\
Tardos's function & $\P$ & $\exp(n^{\Omega(1)})$ & \cite{Alon1987,Tardos1988} \\
Odd factor & $\L$ & $\exp(\Omega(\log^2 n))$ & \cite{Babai1999} \\
$\mathbb{Z}_2$-satisfiability &  $\pL$ & $\exp(n^{\Omega(1)})$ & \cite{Garg2020,Goos2019} \\
$\mathbb{Z}_2$-satisfiability, padded & $\AC^0[\oplus]$ & $\exp(\Omega(\log^k n))$ & \cite{Cavalar2023} \\
\midrule
Bipartite matching & $\RNC$ & $\exp(n^{\Omega(1)})$ & This work (\cref{thm:main}) \\
Odd factor & $\textsf{L}$ & $\exp(n^{\Omega(1)})$ & This work (\cref{thm:l}) \\
Odd factor, padded & $\AC^0$ & $\exp(\Omega(\log^k n))$ & This work (\cref{thm:ac0})\\
\bottomrule[.5mm]
\end{tabular}
\caption{Timeline of separations between general and monotone complexities.
The parameter $k$ can be taken to be any large constant at the cost of
increasing the depth of the $\AC^0$ circuit.}
\label{tab:separations}
\end{table}

\subsection{Technique: Matching sunflowers}
\label{sec:techniques}

We follow the classic \emph{approximation method} introduced by Razborov~\cite{Razborov1985,Razborov1985a}. By now, this standard method is featured in several textbooks~\cite{Wegener1987,Arora2009,Jukna2012,Watson2025}. To prove a lower bound for an $n$-bit boolean function $f$, the method starts by defining a distribution~$\cald$ over the input domain $\{0,1\}^n$. The goal is to show that (i) if $f$ is computed by a small monotone circuit, then $f$ can be approximately computed (relative to $\cald$) by a small monotone DNF; and (ii)~no small DNF correlates well with $f$.

The technical crux of the proof is to identify situations when a monotone DNF $\bigvee_{S\in\cals} t_S$ (where $t_S\coloneqq \bigwedge_{i\in S}x_i$ and $\cals\subseteq 2^{[n]}$) can be safely replaced 
with the single term $t_K$ where $K\coloneqq\bigcap\cals$ is the \emph{core}, and doing so does not incur much error (relative to $\cald$). This replacement procedure, often called ``plucking'', simplifies the DNF in case $|\cals|\geq 2$. 
Previous works~\cite{Rossman2014,Cavalar2020,Blasiok2025}
have employed the notion of ``robust sunflowers'' to find such DNFs for plucking. A family $\cals\subseteq 2^{[n]}$, $|\cals|\geq 2$, is an
\emph{$\varepsilon$-robust sunflower} if 
the core $K = \bigcap \cals$
satisfies
\begin{equation}
    \label{eq:robustness-cond}
    \Pr_{\bm x \sim \{0,1\}^n}[\,\exists S \in \cals \colon t_{S
    \setminus K}(\bm x) =1\,] \geq 1-\eps.
\end{equation}
This says that, for a uniform random $\bm x$, whenever $t_K$ accepts $\bm x$, it is highly likely that $\bigvee_{S\in\cals}t_S$ accepts it too. That is, the approximation error is small. Recent works~\cite{Alweiss2021,Rao2020,bcw21} have proved optimal bounds on the size of families $\cals$ that are guaranteed to contain a subfamily $\cals'\subseteq\cals$ that is a robust sunflower.
We use the following bound from~\cite[Theorem 3]{bcw21}\footnote{
    The lemma follows directly from the main result of~\cite{bcw21} 
    by a standard argument~\cite[Appendix A]{Cavalar2020}.
}.
\begin{lemma}[\protect{Robust Sunflower Lemma~\cite{bcw21}}]
\label{lem:robust-sunflowers}
    There exists a universal constant~$c>0$
    such that every
    family $\cals$ of $\ell$-sets of size $|\cals|\geq (c\log (\ell/\varepsilon))^\ell$ contains an $\eps$-robust sunflower.
\end{lemma}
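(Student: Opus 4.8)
The plan is to follow the approach of Rao~\cite{Rao2020} (building on Alweiss--Lovett--Wu--Zhang~\cite{Alweiss2021}): first extract from $\cals$ a ``spread'' subfamily together with a well-behaved core, then show that any spread family of bounded-size sets is satisfied by a uniformly random input with probability at least $1-\eps$. Call a family $\calf$ of sets \emph{$r$-spread} if $|\{F\in\calf:Y\subseteq F\}|\le r^{-|Y|}|\calf|$ for every nonempty $Y$. Fix $r\coloneqq c\log(\ell/\eps)$; by taking the universal constant in the lemma large relative to $c$, we may assume $r\ge 2$ and $|\cals|>r^{\ell}$.

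\textbf{Stage 1: extracting a spread link.} First I would pick $K\subseteq[n]$ inclusion-maximal subject to $|\cals_K|\ge r^{-|K|}|\cals|$, where $\cals_K\coloneqq\{S\in\cals:K\subseteq S\}$ (the empty set qualifies, so such a $K$ exists), and set $\call\coloneqq\{S\setminus K:S\in\cals_K\}$. Then $|\call|=|\cals_K|\ge r^{-|K|}|\cals|>r^{\ell-|K|}\ge 1$; moreover $|K|<\ell$, since otherwise $\cals_K\subseteq\{K\}$ would force $|\cals|\le r^{\ell}$, and hence in fact $|\call|\ge 2$. For nonempty $Y$ disjoint from $K$, inclusion-maximality of $K$ applied to $K\cup Y$ gives $|\{T\in\call:Y\subseteq T\}|=|\{S\in\cals_K:K\cup Y\subseteq S\}|<r^{-|K|-|Y|}|\cals|\le r^{-|Y|}|\call|$, so $\call$ is $r$-spread. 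Since $r\ge 2$, no element lies in all petals, so $\bigcap\cals_K=K$; and because $t_{S\setminus K}$ reads only coordinates outside $K$,
\[
\Pr_{\bm x\sim\{0,1\}^n}\bigl[\exists S\in\cals_K: t_{S\setminus K}(\bm x)=1\bigr]\;=\;\Pr_{\bm W}\bigl[\exists T\in\call: T\subseteq\bm W\bigr],
\]
where $\bm W$ is a uniformly random subset of the ground set. Thus $\cals_K$ will be the desired $\eps$-robust sunflower once we lower bound the right-hand probability.

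\textbf{Stage 2: spread families capture a random set.} It remains to show that every nonempty $r$-spread family $\calf$ whose sets have size at most $m\le\ell$ satisfies $\Pr_{\bm W}[\exists F\in\calf:F\subseteq\bm W]\ge 1-\eps$. I would prove this by induction on $m$: reveal $\bm W$ in $\lceil\log_2(2\ell)\rceil$ rounds, each round exposing a uniformly random half $\bm U$ of the as-yet-unrevealed coordinates. Using $r$-spreadness, one shows that except with probability $\eps/\Theta(\log\ell)$ the residual family $\calf'\coloneqq\{F\setminus\bm U:F\in\calf,\ |F\setminus\bm U|\le m/2\}$ is nonempty and still spread enough (re-applying the Stage-1 core extraction if necessary) to invoke the inductive hypothesis with the remaining random coordinates and size bound $m/2$. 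After all rounds the surviving sets are empty, i.e.\ some $F\in\calf$ lies inside $\bm W$; summing the per-round errors bounds the total failure probability by $\eps$. Together with Stage 1 this makes $\cals_K$ an $\eps$-robust sunflower, proving the lemma.

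\textbf{Main obstacle.} The technical heart is the single-round estimate of Stage 2: pushing the per-round failure probability down to $\eps/\Theta(\log\ell)$ while spending only $r=\Theta(\log(\ell/\eps))$ on spreadness, and checking that enough spreadness survives across all $\Theta(\log\ell)$ rounds for the induction to close with one universal constant. This is precisely where Rao's Shearer/entropy-style encoding bound is needed, in place of the cruder counting argument; the remaining bookkeeping is routine.
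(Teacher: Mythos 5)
The paper does not prove \cref{lem:robust-sunflowers}: it is imported verbatim from Rao~\cite{Rao2020} and used as a black box in the proof of the Matching Sunflower Lemma (\cref{lem:matching-sunflowers}), so there is no in-paper argument to compare yours against; you are reconstructing the proof of the cited result.

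Judged on its own terms, your Stage~1 is correct and essentially complete: taking $K$ inclusion-maximal with $|\cals_K|\ge r^{-|K|}|\cals|$ does yield an $r$-spread link $\call$, the deduction $|K|<\ell$ (hence $|\cals_K|\ge 2$) is right, $\bigcap\cals_K=K$ follows from $r\ge 2$, and the rewriting of $\eps$-robustness as $\Pr_{\bm W}[\,\exists T\in\call:T\subseteq\bm W\,]\ge 1-\eps$ is exactly what is needed. Stage~2, however, is a plan rather than a proof. The single nontrivial claim --- that $r$-spreadness with $r=\Theta(\log(\ell/\eps))$ lets each of $\Theta(\log\ell)$ rounds halve the residual set sizes while costing only $\eps/\Theta(\log\ell)$ in failure probability and keeping the survivor sufficiently spread --- is precisely the encoding/compression estimate that constitutes the bulk of Rao's and ALWZ's arguments, and you assert it rather than establish it, flagging it yourself as the ``main obstacle.'' (Your per-round sampling scheme, revealing a uniformly random half of the unrevealed coordinates, also differs from the decomposition of $\bm W$ into independent biased subsets that the encoding argument actually manipulates, and reconciling the two is not quite routine.) As a reconstruction of the architecture of the cited proof your proposal is sound; as a self-contained proof of the lemma it is missing its load-bearing step.
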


For our purposes, we need instead a sunflower lemma tailored to the bipartite matching problem. The first difference is that, instead of a uniform distribution, we will pluck relative to the following \emph{odd cut} distribution $\cald_0$ over $\PM^{-1}(0)$ (which Razborov~\cite{Razborov1985} also used).
\begin{definition}[Odd cut distribution $\cald_0$]\label{def:odd-cut}
    To sample $\bm x\sim\cald_0$, first sample a uniform random
    colouring $\bm c \in \{0,1\}^{2n}$ of the vertices of $K_{n,n}$ with an odd number of $1$s. To build
    the bipartite graph $\bm x$, connect any two vertices (on opposite sides) that have the same colour under $\bm
    c$. The resulting graph is a union of two odd-sized bicliques:
    \begin{center}
        \begin{tikzpicture}[scale=1.1]
    \def\n{8}
    \def\size{0.2cm}
    \colorlet{onecolor}{black} 
    \colorlet{zerocolor}{white}

    \foreach \i in {0, 1, ..., \n}{
        \ifnum\i<5
            \edef\col{onecolor}
        \else
            \edef\col{zerocolor}
        \fi
        \node[graph-vert = {\col}{\size}, black!80, fill = \col!80] (u\i) at (\i, 1) {};
    }

    \foreach \i in {0, 1, ..., \n}{
        \ifnum\i<4
            \edef\col{onecolor}
        \else
            \edef\col{zerocolor}
        \fi
        \node[graph-vert = {\col}{\size}, black!80, fill = \col!80] (d\i) at (\i, 0) {};
    }

    \foreach \i in {0, 1, ..., 4}{
        \foreach \j in {0, 1, ..., 3}{
            \draw (u\i) -- (d\j);
        }
    }
    % Connect white group: L5–L8 × R6–R8
    \foreach \i in {5, 6, ..., \n}{
        \foreach \j in {4, 5, ..., \n}{
            \draw (u\i) -- (d\j);
        }
    }
\end{tikzpicture}

    \end{center}
\end{definition}
Suppose $\calm$ is a family of 
$\ell$-matchings (each matching has $\ell$ edges) in $K_{n, n}$. We say that the family $\calm$, $|\calm|\geq2$, is an \emph{$\eps$-matching sunflower} if the core $K = \bigcap \calm$ satisfies
\begin{equation} \label{eq:match-sunflower}
    \Pr[\,\exists M \in \calm,
    \forall e \in M\setminus K :
    \text{$e$ is monochromatic under $\bm c$}\,] \geq 1-\eps.
\end{equation}
This says that, for an input $\bm x\sim\cald_0$, whenever $t_K$ accepts $\bm x$, it is highly likely that $\bigvee_{M\in\calm}t_M$ accepts it too. That is, the approximation error is small. We prove the following in \cref{sec:sunflower}.
\begin{lemma}[Matching Sunflower Lemma]
    \label{lem:matching-sunflowers}
    \label{lem:sunflower-bound}
    There exists a universal constant $c>0$ such that every
    family $\calm$ of $\ell$-matchings of size $|\calm|\geq (c\ell\log^2 (\ell/\varepsilon))^\ell$ contains an $\eps$-matching sunflower.
\end{lemma}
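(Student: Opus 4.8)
The strategy is to mirror the modern proof of the robust sunflower lemma (\cref{lem:robust-sunflowers}), replacing its probabilistic core by a \emph{covering lemma for matchings} adapted to the odd cut distribution $\cald_0$. Call a family $\calm$ of $\ell$-matchings \emph{$r$-spread} if $|\{M\in\calm:Z\subseteq M\}|\le r^{-|Z|}|\calm|$ for every matching $Z$. I will deduce \cref{lem:matching-sunflowers} from the following claim, whose proof is the real content:
\begin{quote}
\textbf{Covering Lemma.} There is a universal $c>0$ such that if $\calm$ is an $r$-spread family of $\ell$-matchings with $r\ge c\ell\log^2(\ell/\eps)$, then $\Pr_{\bm c}[\,\exists M\in\calm:\text{every }e\in M\text{ is monochromatic under }\bm c\,]\ge 1-\eps$, where $\bm c$ is sampled as in \cref{def:odd-cut}.
\end{quote}

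\textbf{From the Covering Lemma to \cref{lem:matching-sunflowers}.} Set $r\coloneqq c\ell\log^2(\ell/\eps)$, so the hypothesis reads $|\calm|\ge r^\ell$; we may assume $\eps\le\tfrac12$, so $r\ge 2$. Run the familiar ``spread-or-restrict'' recursion, maintaining a matching $Z$ (initially empty) and the link family $\calm_Z\coloneqq\{M\setminus Z:Z\subseteq M\in\calm\}$ of $(\ell-|Z|)$-matchings, with the invariant $|\calm_Z|\ge r^{\ell-|Z|}$. If $\calm_Z$ is $r$-spread, stop; otherwise some nonempty matching $Z'$ satisfies $|\{N\in\calm_Z:Z'\subseteq N\}|>r^{-|Z'|}|\calm_Z|\ge r^{\ell-|Z|-|Z'|}$, and we replace $Z$ by $Z\cup Z'$ (the two are disjoint since $Z'\subseteq N\in\calm_Z$), preserving the invariant. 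Here $Z'$ cannot be a full matching of $\calm_Z$ (which would force $1>r^{-|Z'|}|\calm_Z|\ge 1$), so $|Z|$ strictly increases and stays $\le\ell-1$; thus the process terminates with an $r$-spread link family $\calm_Z$ of $j$-matchings, $1\le j\le\ell$, and $|\calm_Z|\ge r^j\ge r\ge 2$. Applying the Covering Lemma to $\calm_Z$: with probability $\ge1-\eps$ there is $M\setminus Z\in\calm_Z$ all of whose edges are monochromatic. The subfamily $\calm'\coloneqq\{M\in\calm:Z\subseteq M\}$ has $|\calm'|=|\calm_Z|\ge2$ and core $K\coloneqq\bigcap\calm'\supseteq Z$, so $M\setminus K\subseteq M\setminus Z$ and \eqref{eq:match-sunflower} holds for $\calm'$; hence $\calm'\subseteq\calm$ is an $\eps$-matching sunflower.

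\textbf{Proving the Covering Lemma.} I would argue via a $T=\Theta(\log(\ell/\eps))$-round exposure process (equivalently, an induction on $\ell$), as in the modern robust sunflower proofs~\cite{Alweiss2021,Rao2020}: in each round expose the colours of a fresh random $\approx 1/T$-fraction of the still-uncoloured vertices, discard every matching that has acquired a bichromatic edge, and pass to the ``residual'' family obtained from the survivors by deleting the edges now confirmed monochromatic. Two things must be shown: (i) the residual family remains $\Omega(r)$-spread with high probability (an averaging argument, as in the set case), and (ii) each round shrinks the residual matchings by a constant factor (a spread family cannot concentrate on few edges), so after $\Theta(\log\ell)$ rounds the residual matchings have $O(1)$ edges and a clean-up round finishes one of them. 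The base case $\ell=1$ already exposes the source of the squared logarithm: the graph $\bm x\sim\cald_0$ is a union of two bicliques, so a $1$-spread family of $\ge r$ distinct edges may form a $\sqrt r\times\sqrt r$ biclique, which avoids a monochromatic edge exactly when $\bm c$ restricts to a proper $2$-colouring of it — probability $\approx 2^{-2\sqrt r}$. Thus already at $\ell=1$ one needs $r\gtrsim\log^2(1/\eps)$, since a sub-biclique on $A\times B$ contributes only $\approx|A|+|B|$ (not $|A|\cdot|B|$) independent coin flips; taking the per-round failure probability to be $\eps/\poly(\ell)$ (to survive the union bounds over the $\Theta(\log(\ell/\eps))$ rounds and over the $\le\ell$ edges of a matching that can be simultaneously half-exposed) is what promotes $\log(1/\eps)$ to $\log^2(\ell/\eps)$ and produces the leading factor $\ell$.

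\textbf{Main obstacle and a minor point.} The crux — where the matching setting genuinely departs from the classical one — is the handling of \emph{half-exposed edges}: after a batch of vertices is coloured, an edge of a surviving matching may have exactly one coloured endpoint, so it is neither killed nor resolved, and its fate is a fresh unbiased bit once the other endpoint appears. Proving that the residual family stays spread therefore requires tracking left- and right-endpoints separately and bounding, for each partial matching $Z$, the mass of matchings containing $Z$ together with a prescribed partial-colouring pattern — precisely the place where the biclique degeneracy above must be quantified; this is the step I expect to demand the most care, and it is the real reason for both extra factors over \cref{lem:robust-sunflowers}. Finally, a routine technicality: the exposure argument is cleanest for a fully uniform colouring $\bm c'\in\{0,1\}^{2n}$, whereas $\cald_0$ conditions on $\bm c'$ having an odd number of $1$s. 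Since $\Pr[\bm c'\text{ odd}]=\tfrac12$, for any event $A$ one has $\Pr_{\bm c\sim\cald_0}[A]=2\,\Pr_{\bm c'}[A\wedge(\bm c'\text{ odd})]\ge 2\,\Pr_{\bm c'}[A]-1$, so it suffices to prove the uniform-colouring version of the Covering Lemma with error $\eps/2$.
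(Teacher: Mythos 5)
Your reduction from the Covering Lemma to \cref{lem:matching-sunflowers} is fine: the ``spread-or-restrict'' recursion is carried out correctly, the termination argument ruling out $Z'$ being a full matching works, and the factor-$2$ remark handling the odd-parity conditioning in \cref{def:odd-cut} is sound, since $M\setminus K\subseteq M\setminus Z$ makes \eqref{eq:match-sunflower} monotone in the core. The problem is that the Covering Lemma itself --- which is the entire technical content of the statement --- is never proved. What you give is a plan (``I would argue via a $T$-round exposure process'', ``two things must be shown'', ``this is the step I expect to demand the most care''), and the plan stalls exactly at the point where the matching setting differs from the set setting: the randomness lives on \emph{vertices} while spreadness is measured on \emph{edge sets}, survival of an edge is a parity condition on two bits rather than a conjunction, and after a partial exposure you must control half-exposed edges and the biclique-type degeneracies you yourself identify in the $\ell=1$ case. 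Your base-case computation correctly shows why $\log^2(1/\eps)$ is forced, but it also shows that the standard ALWZ/Rao bookkeeping (residual family stays $\Omega(r)$-spread; matchings shrink by a constant factor per round) cannot be quoted --- it has to be re-derived for this correlated, vertex-indexed probability space, and no such derivation is given. As it stands, the proposal is a reduction of the lemma to an unproved statement of essentially the same difficulty.

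For comparison, the paper takes a much shorter route that avoids re-proving any ALWZ-type statement: it first pays a factor $\ell^\ell 2^{O(\ell)}$ to pass to a ``blocky'' subfamily (a random labelling of the vertices into $\ell$ blocks, keeping only matchings whose edges are monochromatic with distinct labels), so that matchings are in bijection with their vertex sets; it then applies the existing robust sunflower lemma (\cref{lem:robust-sunflowers}) to the family of $2\ell$-element vertex sets, and converts the resulting vertex sunflower into a matching sunflower by a colour-flipping argument that conditions on the colours of the core vertices that are alone in their block. That reduction is where the extra factor $\ell$ (and one extra $\log$) in the bound comes from, and it sidesteps precisely the half-exposed-edge analysis your sketch leaves open. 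If you want to pursue your direct approach, the missing piece you must supply is a quantitative spreadness-preservation lemma for partially coloured matchings; without it the proof is incomplete.
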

Our proof is surprisingly simple: it is by a \emph{reduction} to the robust sunflower lemma. 
Implicit in the original proof of Razborov~\cite{Razborov1985} is that one
can take $|\calm|\geq 4^{\ell^2}(c \ell \log(1/\eps))^{2\ell}$ in the above
lemma. This is exponentially worse in terms of $\ell$. Our improvement
above is what directly translates into an exponential monotone circuit
lower bound. We discuss in \cref{sec:approximation} how
\cref{lem:matching-sunflowers} plugs into the standard approximation method
to prove \cref{thm:main,thm:l,thm:ac0}.

\subsection{Other related work}

The analogue of \cref{thm:main} for monotone \emph{formulas} has been known
for a long time. Raz and Wigderson~\cite{Raz1992} proved that bipartite
matching requires monotone formulas of size $2^{\Omega(n)}$ and this is
tight. The same lower bound holds for odd factor~\cite{Babai1999} and this
was shown to imply quasi-polynomial lower bounds for the $\AC^0$-computable
padded odd factor in~\cite{Cavalar2023}.

Previous works have also studied general-vs-monotone separations in the setting of constant-depth circuits.
Okol’nishnikova~\cite{Okolnishnikova1982} and Ajtai and
Gurevich~\cite{Ajtai1987} exhibited a monotone function in $\AC^0$ that
requires monotone constant-depth circuits of size $n^{\omega(1)}$. This was
quantitatively improved by Chen, Oliveira, and Servedio~\cite{Chen2017}
showing a lower bound of $2^{n^{\Omega(1)}}$. Our \cref{thm:ac0} thus
improves qualitatively on \cite{Okolnishnikova1982,Ajtai1987,Cavalar2023} but is
incomparable to~\cite{Chen2017}.
It remains open whether there exists a monotone function in $\AC^0$
with exponential monotone circuit complexity.
This would be an ultimate general-vs-monotone separation, generalising all the aforementioned results. The analogous question for \emph{arithmetic} circuits was only recently
settled~\cite{Chat21}.

Besides general-vs-monotone separations, another foremost goal in monotone
complexity is to find explicit functions with maximal monotone circuit
complexities. The clique function has been studied the
most~\cite{Razborov1985,Alon1987,Jukna1999,Rossman2014,Cavalar2020,Lovett2022,Blasiok2025,Rezende2025}.
Currently, the largest explicit lower bound is
$\smash{2^{n^{1/2-o(1)}}}$~\cite{Cavalar2020} with previous records being held
by~\cite{Andreev1987,Harnik2000}. The question of proving ``truly''
exponential lower bounds of the form $2^{\Omega(n)}$ remains open. It has
been solved for monotone \emph{formulas} by Pitassi and
Robere~\cite{Pitassi2017}.

\section{Matching Sunflower Lemma \texorpdfstring{(\cref{lem:sunflower-bound})}{}}
\label{sec:sunflower}

In this section, we prove \cref{lem:sunflower-bound}.
Suppose $\calm$ is a family of $\ell$-matchings with $|\calm| \geq (c \ell \log^2(\ell/\eps))^{\ell}$, where $c$ is a large enough constant (to be determined). Our goal is to show that $\calm$ contains an $\eps$-matching sunflower. We start by simplifying the family $\calm$ by making it ``blocky''.

\paragraph{Reduction to blocky families.}
Consider partitioning the vertices of $K_{n,n}$ into $\ell$ blocks according to a random labelling $\bm b\sim[\ell]^{2n}$.  We say that a
matching $M$ is \emph{consistent} with $\bm b$ if every edge $uv\in M$ is monochromatic under $\bm b$ (that is, $\bm b_u = \bm b_v$) and all edges receive distinct labels. Here is an illustration (with blocks of the same size, for simplicity):
\begin{center}
    \begin{tikzpicture}[scale=1.1]
    \def\colorlist{{
            "VioletRed", "VioletRed", "VioletRed",
            "ForestGreen", "ForestGreen", "ForestGreen",
            "Peach", "Peach", "Peach",
            "black", "black", "black",
            "Periwinkle", "Periwinkle", "Periwinkle",
            "VioletRed", "Sepia", "black", "black", "black"
        }}

    \def\n{14}
    \def\size{0.2cm}

    \foreach \i in {0, 1, ..., \n}{
        \pgfmathparse{\colorlist[\i]}
        \edef\col{\pgfmathresult}
        \node[graph-vert = {\col}{\size}] (u\i) at (\i, 1) {};
        \node[graph-vert = {\col}{\size}] (d\i) at (\i, 0) {};
    }

    \foreach \i in {2.5, 5.5, 8.5, 11.5}{
        \draw[dashed] (\i, 1.4) -- ++(0, -1.8);        
    }

    \draw[thick] (u0) -- (d2);
    \draw[thick] (d4) -- (u5);
    \draw[thick] (u7) -- (d7);
    \draw[thick] (d10) -- (u11);
    \draw[thick] (u13) -- (d14);
\end{tikzpicture}

\end{center}

For a fixed $\ell$-matching $M\in\calm$, there are $\ell^{2\ell}$ ways of labelling its endpoints, and $\ell!$ of these yield a consistent labelling. Thus $
\Pr[\,\text{$M$ is consistent with $\bm b$}\,] = \ell!/\ell^{2\ell} \geq \ell^{-\ell}2^{-O(\ell)}$. By averaging, there exists a fixed labelling $b$ that is consistent with at least $|\calm| \ell^{-\ell}2^{-O(\ell)}$ matchings in $\calm$. Let us delete all matchings inconsistent with $b$, and continue to denote the resulting set by $\calm$ for simplicity. For large enough $c$, the number of remaining matchings is $|\calm| \geq (c' \log (4\ell / \eps))^{2\ell}$, where~$c'$ is the universal constant from the robust sunflower lemma (\cref{lem:robust-sunflowers}).

\paragraph{Finding a ``vertex'' sunflower.}
Define $\calv \coloneqq \{V(M) : M \in \calm\}$, $V(M)\coloneqq \bigcup M$, as the family of endpoints of
matchings in $\calm$. Since $\calm$ is blocky, $\calv$ and $\calm$ are in 1-to-1 correspondence: for
every~$V \in \calv$ there is a \emph{unique} matching $M\in\calm$ with $V=V(M)$. Thus $\calv$ is a family
of $2\ell$-sets of size~$|\calv|=|\calm|\geq (c' \log (4\ell/\varepsilon))^{2\ell}$. We can now apply
\cref{lem:robust-sunflowers} to find an $\eps/2$-robust sunflower~$\calv'\subseteq\calv$ (``vertex''
sunflower) with core $K \coloneqq \bigcap \calv'$. Note that every block contains $0$, $1$, or $2$
vertices from $K$. Let $K_1$ be the vertices in $K$ that are unique in their block, and $K_2$ be the
vertices that share a block with another vertex, so $K = K_1 \sqcup K_2$.

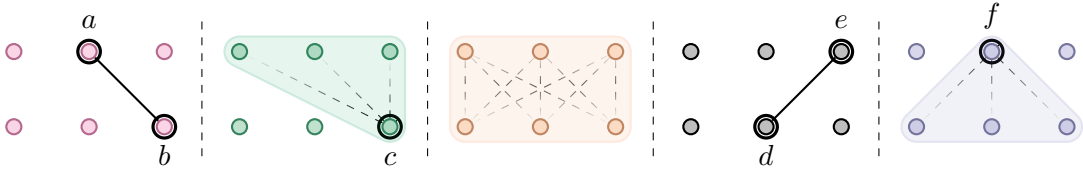
\begin{figure}[t]
    \centering
    \begin{tikzpicture}[scale=1.1]

    \def\colorlist{{
            "VioletRed", "VioletRed", "VioletRed",
            "ForestGreen", "ForestGreen", "ForestGreen",
            "Peach", "Peach", "Peach",
            "black", "black", "black",
            "Periwinkle", "Periwinkle", "Periwinkle",
            "VioletRed", "Sepia", "black", "black", "black"
        }}

    \def\n{14}
    \def\size{0.2cm}

    \foreach \i in {0, 1, ..., \n}{
        \pgfmathparse{\colorlist[\i]}
        \edef\col{\pgfmathresult}
        \node[graph-vert = {\col}{\size}] (u\i) at (\i, 1) {};
        \node[graph-vert = {\col}{\size}] (d\i) at (\i, 0) {};
    }

    \foreach[count = \i] \j / \pos in
        {u1/above, d2/below, d5/below, d10/below, u11/above, u13/above}{
        \draw[very thick] (\j) circle (0.15);
        \node[text height = 0.2cm, \pos = 5pt] at (\j) {$\numletter{\i}$};
    }

    \foreach \i in {2.5, 5.5, 8.5, 11.5}{
        \draw[dashed] (\i, 1.4) -- ++(0, -1.8);        
    }

    \draw[thick] (u1) -- (d2);
    \draw[thick] (d10) -- (u11);

    \foreach \i in {3, 4, 5}{
        \draw[dashed, path fading = north] ([xshift = 5pt, yshift = 5pt]d5) --
            (d5) -- (u\i);
    }

    \foreach \i in {12, 13, 14}{
        \draw[dashed, path fading = south] ([xshift = 5pt, yshift = 5pt]u13) --
            (u13) -- (d\i);
    }

    \foreach \i in {6, 7, 8}{
        \foreach \j in {6, 7, 8}{
            \draw[dashed, path fading = north, path fading = middle]
                ([xshift = 1pt, yshift = 0pt]u\i) -- (u\i) -- (d\j);
        }
    }

    \path (u6) ++(-45:0.2) coordinate (t); 
    \draw[ForestGreen, opacity = 0.2, fill, fill opacity = 0.1, thick]
        (d5) ++({atan(2) - 180}:0.2) arc ({atan(2) - 180}:0:0.2) -- ++(0, 1)
        arc (0:90:0.2) -- ++(-2, 0) arc (90:{atan(2) + 180}:0.2) -- cycle;

    \pic[shift = {(d6)}] {shaded-area = {Peach}{2}{1}};

    \draw[Periwinkle, opacity = 0.2, fill, fill opacity = 0.1, thick]
        (d12) ++(135:0.2) arc (135:270:0.2) -- ++(2, 0) arc (-90:45:0.2) -- ++(-1, 1)
        arc (45:135:0.2) -- cycle;
   
\end{tikzpicture}
    \caption{
        Matching sunflower $\calm' \subseteq \calm$ with core $D = \{ab, de\}$ constructed out of a
        vertex sunflower~$\calv' \subseteq \calv$ with core $K = \bigcap \calv' = K_1 \sqcup K_2$ where
        $K_1 = \{c, f\}$ and $K_2= \{a, b, d, e\}$. The shaded regions indicate where non-core edges of matchings in $\calm'$ can occur.
    }
    \label{fig:m-sunflower}
\end{figure}

\paragraph{Finding a matching sunflower.}
Let $\calm'\subseteq\calm$ be the matchings corresponding to vertex sets~$\calv'$ according to the 1-to-1 correspondence. Let~$D$ be the matching that connects pairs of vertices in~$K_2$ that share a block. The following claim completes the proof.

\pagebreak
\begin{claim}
$\calm'$ is an $\varepsilon$-matching sunflower (with core $D=\bigcap\calm'$). 
\end{claim}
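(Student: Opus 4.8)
The plan is to verify the two assertions of the claim in turn: that $\bigcap\calm'=D$, and that the robustness inequality \eqref{eq:match-sunflower} holds. For the core identity I would use blockiness: since every $M\in\calm'$ is consistent with $b$, it has exactly one edge per block. Hence if two vertices $u,v\in K=\bigcap\calv'$ lie in a common block, then, being in $V(M)$ for every $M\in\calm'$, they are matched by $M$ through that block's unique edge, so they must be matched \emph{to each other}; thus $uv\in M$ for all $M$, i.e.\ $uv\in\bigcap\calm'$, and $uv\in D$ by definition. Conversely, any $e=uv\in\bigcap\calm'$ has $u,v\in\bigcap\calv'=K$ and $u,v$ monochromatic under $b$ (consistency), so $u,v$ share a block and $uv\in D$. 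The same reasoning shows $D$ is a matching pairing up the vertices of $K_2$ across blocks, that the non-core edges $M\setminus D$ of an $M\in\calm'$ are exactly its edges in blocks containing $0$ or $1$ vertices of $K$, and that in a block with a unique $K$-vertex $w\in K_1$ (say $w$ on the left) the $M$-edge joins $w$ to a ``free'' vertex, in $V(M)\setminus K$, on the right, while in a $K$-free block both endpoints of the $M$-edge are free. Finally $|\calm'|=|\calv'|\ge 2$, so $\calm'$ is a legitimate sunflower.

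For \eqref{eq:match-sunflower} the plan is a \emph{reduction to the robust-sunflower property of $\calv'$} via an affine re-encoding of the colouring. Define $\bm c\mapsto\bm y\in\{0,1\}^{2n}$ block by block: on a block whose unique $K$-vertex is $w\in K_1$ with $w$ on the left, set $\bm y_v\coloneqq\bm c_v\oplus\bm c_w\oplus 1$ for each right vertex $v$ of the block (so $\bm y_v=1$ exactly when the edge $wv$ is monochromatic under $\bm c$) and $\bm y_v\coloneqq\bm c_v$ for the left vertices, including $w$; treat the case $w$ on the right symmetrically; and on every other block put $\bm y\coloneqq\bm c$. This map is a bijection of $\{0,1\}^{2n}$ (invert it block by block: $\bm c_w=\bm y_w$, then $\bm c_v=\bm y_v\oplus\bm y_w\oplus 1$), so a uniform $\bm c$ yields a uniform $\bm y$. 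The crucial pointwise implication is: if some $M\in\calm'$ satisfies $t_{V(M)\setminus K}(\bm y)=1$, then every edge of $M\setminus D$ is monochromatic under $\bm c$. Indeed, the edge of $M$ in a block with $K_1$-vertex $w$ is $wv$ with $v$ free, so $\bm y_v=1$, which by construction forces $\bm c_v=\bm c_w$; and the edge of $M$ in a $K$-free block is $uv$ with $u,v$ both free, so $\bm y_u=\bm y_v=1$, i.e.\ $\bm c_u=\bm c_v=1$, hence monochromatic.

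Putting this together: let $\bm c$ be uniform on $\{0,1\}^{2n}$; then $\bm y$ is uniform, so since $\calv'$ is an $\eps/2$-robust sunflower with core $K$, the event ``$\exists M\in\calm'$ with $t_{V(M)\setminus K}(\bm y)=1$'' has probability at least $1-\eps/2$, and by the pointwise implication the event ``$\exists M\in\calm'$ such that every $e\in M\setminus D$ is monochromatic under $\bm c$'' then also has probability at least $1-\eps/2$ under uniform $\bm c$. The odd-cut colouring is precisely the uniform colouring conditioned on the probability-$\tfrac12$ event that the number of $1$s is odd, so conditioning at most doubles the failure probability, yielding the bound $\ge 1-\eps$ required in \eqref{eq:match-sunflower}.

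I expect the main obstacle to be pinning down the re-encoding correctly: the key realisation is that in a block carrying a $K_1$-vertex the free vertex must be forced to \emph{match that vertex's colour} rather than merely to be coloured $1$, which is exactly why the affine shift by $\bm c_w\oplus1$ is needed there (whereas in $K$-free blocks the naive choice $\bm y=\bm c$ already gives a one-sided implication, which is all we need); and that one should run the argument against the uniform colouring and only afterwards condition on odd weight, since the re-encoding is not weight-preserving and so does not map the odd-cut distribution to itself. The remaining combinatorial bookkeeping — that consistency gives one edge per block, that this forces $\bigcap\calm'=D$, and that the block-local map is bijective — is then routine.
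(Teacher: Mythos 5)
Your proof is correct, and the overall strategy — reducing the matching-sunflower property to the robust-sunflower property via an affine (XOR) re-encoding of the colouring, then conditioning on odd weight at the end — matches the paper's. There are two pleasant local differences. First, for the converse inclusion $\bigcap\calm'\subseteq D$, you argue directly from blockiness: any $e\in\bigcap\calm'$ has both endpoints in every $V(M)$, hence in $K$, and consistency forces them into the same block, so $e\in D$. The paper instead argues by contradiction through the robust-sunflower inequality (if an endpoint of $e$ lay outside $K$, the acceptance probability would be at most $1/2$). Your combinatorial argument is shorter and avoids invoking the probabilistic property where it isn't needed. Second, you formulate the re-encoding as a single measure-preserving bijection $\bm c\mapsto\bm y$ (shifting only the opposite-side vertices in $K_1$-blocks by $\bm c_w\oplus 1$), whereas the paper partitions by the value $\alpha=\bm x_{K_1}$ and XORs with a block-constant vector $x^\alpha$; the two maps agree on the coordinates that the sunflower event actually reads (vertices outside $K$), so they yield the same bound, but yours avoids the case split over $\alpha$ and the slight awkwardness that the paper's map is not injective on $K_1$. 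Neither difference changes the essential mechanism or the final bound.
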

\begin{proof}
First note that $|\calm'|=|\calv'|\geq 2$. Let us check
that $D=\bigcap\calm'$. We have $D\subseteq\bigcap\calm'$ since, for every edge $uv \in D$ and~$M\in\calm'$, the endpoints $u$, $v$ belong to $V(M)$ and share a block, hence $uv \in M$. 
Moreover,
if $uv \in \bigcap \calm'$,
then the endpoints $u, v$
appear in every matching of $\calm'$,
implying
$\{u,v \} \subseteq \bigcap \calv' = K$.
Since $u$ and $v$ share a block,
we get $uv \in D$.
This shows $D = \bigcap \calm'$.

Let $\bm x \sim \{0,1\}^{2n}$ be a uniform colouring. Our goal will be to show
\begin{equation}\label{eq:error} 
    \Pr[\exists M \in \calm', \forall uv \in M \setminus D\colon \bm x_u = \bm x_v] \ge 1 - \eps/2.
\end{equation}
This would conclude the proof, as conditioning on the event ``$\bm x$ has odd many 1s'' (which is what we really care about) can only double the error parameter. To prove \eqref{eq:error}, we show it holds under conditioning on any event ``$\bm x_{K_1} = \alpha$'' where $\alpha\in\{0,1\}^{K_1}$ (which partitions the probability space).

Consider first the simplest case $\bm x'\coloneqq (\bm x\mid \bm x_{K_1}=1_{K_1})$, where we condition all the colours in $K_1$ to be 1. Note that $\bm x$ and $\bm x'$ have the same (uniform) marginal distribution outside $K$. This means we can invoke the robust sunflower property of $\calv'$ for $\bm x'$: with probability $1-\varepsilon/2$ over~$\bm x'=x'$ there exists $V=V(M) \in \calv'$ such that~$t_{V \setminus K}(x') = 1$. We claim that all edges $uv \in M \setminus D$ are coloured 1 under $x'$. Indeed, if $uv \cap K = \emptyset$, then both endpoints are coloured $x'_u=x'_v=1$ by the sunflower property. Otherwise, say, $uv\cap K=\{v\}$. Here one endpoint is coloured $x'_u=1$ by the sunflower property, and the other endpoint has $x'_v = 1$ because of our conditioning.

More generally, we can apply the same logic for $\bm x'\coloneqq (\bm x\mid \bm x_{K_1}=\alpha)$ for any $\alpha\in\{0,1\}^{K_1}$. All we need to do is flip the colours in all blocks that contain $v\in K_1$ with $\alpha_v=0$. Let $x^\alpha \in \{0,1\}^{2n}$ be the fixed colouring that assigns colour 1 to all blocks that contain $v\in K_1$ with $\alpha_v=0$. Formally,~$x^\alpha_v=1$ iff $b(v)\in b(\{u\in K_1:\alpha_u=0\})$, so in particular $x^\alpha_{K_1}\oplus 1_{K_1}=\alpha$. Note that $\bm x'\oplus x^\alpha$ has a uniform marginal distribution outside $K$. This means we can invoke the robust sunflower property of $\calv'$ for $\bm x' \oplus x^\alpha$: with probability $1-\eps/2$ over $\bm x'\oplus x^\alpha=x'\oplus x^\alpha$ there exists $V=V(M)\in\calv'$ such that~$t_{V \setminus K}(x'\oplus x^\alpha) = 1$. We claim that all edges $uv\in M\setminus D$ are monochromatic under $x'$. The case $uv\cap K=\emptyset$ is the same as above. For $uv\cap K=\{v\}$, we have from the sunflower property that~$x'_u\oplus x^\alpha_u=1$. This implies $x'_u=x^\alpha_u\oplus 1= x^\alpha_v\oplus 1=\alpha_v=x'_v$, as desired.
\end{proof}

% \pagebreak
\section{Approximation method \texorpdfstring{(\cref{thm:main,thm:l,thm:ac0})}{}}
\label{sec:approximation}

Consider the input distribution $\cald \coloneqq (\cald_0+\cald_1)/2$ where $\cald_i$ is supported on $\PM^{-1}(i)$ so that
\begin{itemize}[label=$-$,noitemsep]
    \item $\cald_1$ is the uniform distribution over perfect matchings in $K_{n,n}$;
    \item $\cald_0$ is the odd cut distribution from \Cref{def:odd-cut}.
\end{itemize}
To prove \cref{thm:main}, we start with a small monotone circuit computing $\PM$ and aim for a contradiction. Our goal is to approximate the circuit (relative to $\cald$) with a small monotone DNF
\[
F_\calm \coloneqq \bigvee_{M\in\calm} t_M
\qquad\text{where}\qquad t_M\coloneqq \bigwedge_{e\in M} x_e,
\]
and where $\calm$ is a set of (partial) matchings in $K_{n,n}$. We say that $F=F_\calm$ is \emph{$r$-small} if, for every~$\ell$, $\calm$ contains at most $r^\ell$ matchings of size $\ell$, that is, $|\calm\cap\calp_\ell|\leq r^\ell$ where $\calp_\ell$ is the set of all $\ell$-matchings. The approximation method proceeds in two steps:
\begin{lemma}
    \label{lem:dnfs-approximate-intro}
    Suppose a monotone circuit of size $2^w$ computes $\PM$.
    Then there is an $O(w^3 \log^2 n)$-small
    monotone DNF $F$ such that
    $\Pr_{\bm x \sim \cald}[F(\bm x) = \PM(\bm x)] \geq 1-o(1)$.
\end{lemma}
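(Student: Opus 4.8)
\emph{Construction.} Following the approximation method, I would attach to each gate $g$ of the hypothesised size-$2^w$ monotone circuit $C$ for $\PM$ an \emph{approximator} $\widetilde g$: a family of matchings, all of size at most a cutoff $\ell^\ast=\Theta(w)$, that is $r$-small in the sense of the lemma, where $r\coloneqq C\ell^\ast\log^2(\ell^\ast/\eps)$ for a large absolute constant $C$ and an error parameter $\eps$ fixed below. Build $\widetilde g$ bottom-up: for an input $x_e$, $\widetilde{x_e}\coloneqq\{\{e\}\}$; for $g=g_1\lor g_2$, $\widetilde g\coloneqq\mathrm{prune}(\widetilde{g_1}\cup\widetilde{g_2})$; and for $g=g_1\land g_2$, $\widetilde g\coloneqq\mathrm{prune}(\calk)$ where $\calk\coloneqq\{M_1\cup M_2:M_i\in\widetilde{g_i},\ M_1\cup M_2\ \text{a matching}\}$, so products that fail to be matchings are \emph{discarded}; the output DNF is $F\coloneqq F_\calm$ with $\calm\coloneqq\widetilde{g_{\mathrm{out}}}$. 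The operation $\mathrm{prune}$ takes a family of matchings of size $\le 2\ell^\ast$ and processes the levels $\ell=2\ell^\ast,2\ell^\ast-1,\dots,1$ in order: while there are at least $(c\ell\log^2(\ell/\eps))^\ell$ matchings of size $\ell$ (the threshold of \cref{lem:matching-sunflowers}, $c$ its universal constant), it \emph{plucks} an $\eps$-matching sunflower among them, replacing its members by the single core, of size $<\ell$; and once $\ell>\ell^\ast$ and plucking is no longer possible, it \emph{deletes} the remaining size-$\ell$ matchings. Taking $C$ large enough relative to $c$ makes $(c\ell\log^2(\ell/\eps))^\ell\le r^\ell$ for all $\ell\le 2\ell^\ast$, so $\mathrm{prune}$ outputs an $r$-small family of matchings of size $\le\ell^\ast$; and since each pluck strictly decreases the number of matchings, $\mathrm{prune}$ terminates.

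\emph{Error analysis.} Writing $\hat g$ for the function computed at $g$ (so $\hat g_{\mathrm{out}}=\PM$), I would show by induction along $C$ that $\widetilde g\ge\hat g$ pointwise on $\supp(\cald_1)$ and $\widetilde g\le\hat g$ pointwise on $\supp(\cald_0)$, outside an exceptional set whose $\cald$-measure is bounded by summing per-gate contributions over the $2^w$ gates. On $\supp(\cald_1)$ the only harmful steps are deletions: discarding a non-matching product $t_{M_1\cup M_2}$ changes nothing, since a perfect matching containing $M_1$ and $M_2$ contains the matching $M_1\cup M_2$, and plucking only enlarges a DNF. At one $\land$-gate the deletions remove, for each $\ell\in(\ell^\ast,2\ell^\ast]$, fewer than $r^\ell$ matchings (this is why each oversized level is first plucked down to the sunflower threshold), and $\Pr_{\bm x\sim\cald_1}[M\subseteq\bm x]=(n-|M|)!/n!\le(2/n)^{|M|}$, so a union bound gives a contribution $\le\sum_{\ell>\ell^\ast}(2r/n)^\ell\le(2r/n)^{\ell^\ast}$; over $2^w$ gates this totals $\le 2^w(2r/n)^{\ell^\ast}\le(4r/n)^w=o(1)$ because $r=O(w^3\log^2 n)=o(n)$. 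On $\supp(\cald_0)$ the only harmful steps are plucks (discarding products and deleting matchings only shrink a DNF): by \cref{lem:matching-sunflowers}, whose guarantee \eqref{eq:match-sunflower} is exactly for the odd-cut colouring, each pluck turns a $0$ into a $1$ under $\bm x\sim\cald_0$ with probability $\le\eps$; the total number of plucks is at most $2^w$ times the largest family ever formed, which is $\le(2r^{\ell^\ast})^2$, so putting $\eps\coloneqq 2^{-w}r^{-2\ell^\ast}/n$ keeps this total $O(1/n)$. Hence $\Pr_{\bm x\sim\cald}[F(\bm x)\ne\PM(\bm x)]=\tfrac12\Pr_{\cald_0}[F=1]+\tfrac12\Pr_{\cald_1}[F=0]=o(1)$.

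\emph{Main obstacle.} The crux is the self-referential calibration of $\ell^\ast$, $\eps$, and $r$. The matching-sunflower threshold forces $r=\Theta(\ell^\ast\log^2(1/\eps))$, while $\eps$ must absorb about $2^w r^{2\ell^\ast}$ plucks, so $\log(1/\eps)=\Theta(w\log n)$ and therefore $r=\Theta\bigl(w\cdot w^2\log^2 n\bigr)=\Theta(w^3\log^2 n)$; and then the $\cald_1$-side deletion error $2^w(2r/n)^{\ell^\ast}$ is $o(1)$ exactly when $r=o(n)$, i.e.\ $w=n^{1/3-o(1)}$ — the regime and exponent inherited by \cref{thm:main}. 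It is here that the improved quality of \cref{lem:matching-sunflowers} over Razborov's original bound is used: a worse threshold would force a smaller $\ell^\ast$ and hence a weaker final bound. A secondary point is verifying that \cref{lem:matching-sunflowers} genuinely applies at the oversized levels $\ell\in(\ell^\ast,2\ell^\ast]$, which is what dictates that $C$ exceed $c$ by a fixed constant factor; the remaining ingredients — the monotone telescoping, the estimate $(n-|M|)!/n!\le(2/n)^{|M|}$, and termination of $\mathrm{prune}$ — are routine.
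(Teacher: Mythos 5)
Your proof is correct and follows essentially the same approach as the paper's: gate-by-gate approximation by $r$-small bounded-width matching DNFs, plucking $\eps$-matching sunflowers via \cref{lem:matching-sunflowers}, deleting oversized terms at $\land$-gates, and charging errors to plucks on $\cald_0$ and to deletions on $\cald_1$. The only cosmetic differences are that the paper sets $\eps\coloneqq n^{-3w}$ outright (avoiding your self-referential calibration of $\eps$ in terms of $r$), uses width cutoff exactly $w$ rather than $\ell^\ast=\Theta(w)$, and uses the bound $(n-\ell)!/n!\leq (e/n)^\ell$ in place of your $(2/n)^\ell$.
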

\begin{lemma}
    \label{lem:dnfs-are-dumb-intro}
    If a monotone DNF $F$ is $o(n)$-small, then
    $\Pr_{\bm x \sim \cald}[F(\bm x) = \PM(\bm x)] \leq
    1/2+o(1)$.
\end{lemma}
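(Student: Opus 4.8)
The plan is a first-moment argument showing that a small monotone DNF almost never accepts a uniformly random perfect matching. Since $\cald_1$ is supported on $\PM^{-1}(1)$ and $\cald_0$ on $\PM^{-1}(0)$, splitting $\cald=(\cald_0+\cald_1)/2$ gives
\[
\Pr_{\bm x\sim\cald}[F(\bm x)=\PM(\bm x)]=\tfrac12\Pr_{\bm x\sim\cald_1}[F(\bm x)=1]+\tfrac12\Pr_{\bm x\sim\cald_0}[F(\bm x)=0]=\tfrac12+\tfrac12\bigl(\Pr_{\cald_1}[F=1]-\Pr_{\cald_0}[F=1]\bigr),
\]
so it suffices to prove $\Pr_{\cald_1}[F=1]\le\Pr_{\cald_0}[F=1]+o(1)$. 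If $\emptyset\in\calm$ then $F\equiv1$ and both sides equal $1$; otherwise every $M\in\calm$ is nonempty and it is enough to show $\Pr_{\cald_1}[F=1]=o(1)$ (as then the desired inequality holds trivially). In particular the odd-cut distribution $\cald_0$ plays no quantitative role here — all the work is on the perfect-matching side.

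For this, fix an $\ell$-matching $M$. A sample $\bm x\sim\cald_1$ is a uniformly random perfect matching of $K_{n,n}$, and $t_M(\bm x)=1$ iff $M\subseteq\bm x$; the number of perfect matchings containing $M$ is $(n-\ell)!$ out of $n!$, so
\[
\Pr_{\cald_1}[t_M=1]=\frac{(n-\ell)!}{n!}=\frac{1}{n(n-1)\cdots(n-\ell+1)}\le\Bigl(\frac{e}{n}\Bigr)^{\ell},
\]
where the last inequality uses $n(n-1)\cdots(n-\ell+1)=\ell!\binom{n}{\ell}\ge(\ell/e)^\ell(n/\ell)^\ell=(n/e)^\ell$ and holds for every $1\le\ell\le n$. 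This uniform-in-$\ell$ bound is the point that matters: it lets the large matchings in $\calm$ — whose count $r^\ell$ can be astronomically large — be controlled together with the small ones, because the factorial-type decay of $\Pr_{\cald_1}[t_M=1]$ beats the exponential growth of $r^\ell$ whenever $r=o(n)$.

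Now take a union bound over the terms of $F$, grouping by matching size and using $|\calm\cap\calp_\ell|\le r^\ell$ with $r=o(n)$:
\[
\Pr_{\cald_1}[F=1]\le\sum_{\ell\ge1}\ \sum_{M\in\calm\cap\calp_\ell}\Pr_{\cald_1}[t_M=1]\le\sum_{\ell\ge1}r^\ell\Bigl(\frac{e}{n}\Bigr)^\ell=\sum_{\ell\ge1}\Bigl(\frac{er}{n}\Bigr)^\ell=\frac{er/n}{1-er/n}=o(1),
\]
since $er/n\to0$. Plugging this into the opening identity yields $\Pr_\cald[F=\PM]\le1/2+o(1)$, as claimed.

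I do not expect a genuine obstacle: this is a clean calculation. The only points needing attention are the trivial edge case $\emptyset\in\calm$ (which must be set aside so that the union bound is applied only to nonempty, hence individually unlikely, terms) and the requirement that the estimate $\Pr_{\cald_1}[t_M=1]\le(e/n)^\ell$ be valid across the whole range $1\le\ell\le n$ rather than only for $\ell=o(n)$; the falling-factorial bound above handles this, whereas a naive $n^{-\ell}$ estimate would require $\ell$ bounded away from $n$.
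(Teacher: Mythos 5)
Your proof is correct and takes essentially the same route as the paper: split along $\cald=(\cald_0+\cald_1)/2$, trivially bound the $\cald_0$ contribution, union-bound $\Pr_{\cald_1}[F=1]$ over terms grouped by size $\ell$, and use $\Pr_{\cald_1}[t_M=1]=(n-\ell)!/n!\le(e/n)^\ell$ together with $r$-smallness to sum a geometric series. The only cosmetic difference is how that last inequality is established — you use $\ell!\ge(\ell/e)^\ell$ and $\binom{n}{\ell}\ge(n/\ell)^\ell$, while the paper does a short induction on $\ell$ — and both yield the identical estimate.
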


These lemmas imply that any monotone circuit of size $S$ for $\PM$ has
$\log^3 S \log^2 n \geq {\Omega}(n)$. Thus $S\geq \exp(n^{1/3-o(1)})$ which proves \cref{thm:main}. It
remains to prove these two lemmas, which we do in~\crefrange{sec:proof-correlation}{sec:proof-approx}. Finally,
we discuss how to derive \crefrange{thm:l}{thm:ac0} in \cref{s:grigni-sipser}.

\subsection{Proof of \texorpdfstring{\Cref{lem:dnfs-are-dumb-intro}}{Lemma~\ref{lem:dnfs-are-dumb-intro}}}
\label{sec:proof-correlation}

The lemma is trivially true if $F=F_\calm$ contains the empty term (so that $F\equiv 1$). Otherwise,
\begin{align*}
    \Pr_{\bm x \sim \cald}[F(\bm x) = \PM(\bm x)]~
    & =~ {\textstyle\frac{1}{2}}\Pr_{\bm x \sim \cald_0}[F(\bm x) = 0]
     + {\textstyle\frac{1}{2}}\Pr_{\bm x \sim \cald_1}[F(\bm x) = 1]\\
    & \leq~ {\textstyle\frac{1}{2}} + 
    \sum_{\ell\in[n]}
    \sum_{M\in\calm\cap\calp_\ell}\Pr_{\bm x\sim\cald_1}[t_M(\bm x)=1] \\
    & \leq~ {\textstyle\frac{1}{2}} + \sum_{\ell\in[n]}
    o(n)^{\ell}(e/n)^\ell \tag{$o(n)$-smallness and \cref{claim:match} below}\\
    & \leq~ {\textstyle\frac{1}{2}} + \sum_{\mathclap{\ell\in\mathbb{N}\setminus\{0\}}}
    o(1)^{\ell} \\
    & \leq~ {\textstyle\frac{1}{2}}+o(1).
\end{align*}

\begin{claim}
    \label{claim:match}
    $\Pr_{\bm x\sim\cald_1}[t_M(\bm x)=1] \leq (e/n)^\ell$ for every $M \in \calp_\ell$.
\end{claim}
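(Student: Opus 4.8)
The plan is a one-line counting argument followed by two elementary inequalities. First I would note that $\cald_1$ is the uniform distribution over the $n!$ perfect matchings of $K_{n,n}$, which we may identify with the uniform distribution over bijections $\pi\colon[n]\to[n]$ (match left vertex $i$ to right vertex $\pi(i)$). Fix an $\ell$-matching $M\in\calp_\ell$. The event $t_M(\bm x)=1$ says precisely that the graph $\bm x$ contains every edge of $M$; since the $\ell$ edges of $M$ are pairwise vertex-disjoint, this pins $\pi$ down on the $\ell$ left-endpoints of $M$, while the remaining $n-\ell$ left-vertices may be matched bijectively to the remaining $n-\ell$ right-vertices in any of $(n-\ell)!$ ways. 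Hence
\[
    \Pr_{\bm x\sim\cald_1}[t_M(\bm x)=1]=\frac{(n-\ell)!}{n!}=\frac{1}{\ell!\binom{n}{\ell}}.
\]

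It then remains to show $\ell!\binom{n}{\ell}\ge(n/e)^\ell$. For this I would combine the standard bounds $\ell!\ge(\ell/e)^\ell$ (from $e^\ell\ge\ell^\ell/\ell!$) and $\binom{n}{\ell}\ge(n/\ell)^\ell$ (each factor $\tfrac{n-i}{\ell-i}$ of $\binom{n}{\ell}=\prod_{i=0}^{\ell-1}\tfrac{n-i}{\ell-i}$ is at least $\tfrac n\ell$ for $0\le i<\ell$), whose product is $(\ell/e)^\ell(n/\ell)^\ell=(n/e)^\ell$. Substituting back gives $\Pr_{\bm x\sim\cald_1}[t_M(\bm x)=1]\le(e/n)^\ell$, as claimed.

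I do not anticipate any genuine obstacle; the only point worth care is to rewrite the falling factorial $n!/(n-\ell)!$ as $\ell!\binom{n}{\ell}$ before estimating, so that sharp bounds on $\ell!$ and on $\binom{n}{\ell}$ can be applied in tandem. A cruder estimate such as $n!/(n-\ell)!\ge(n-\ell+1)^\ell$ degrades badly when $\ell$ is comparable to $n$ — indeed, at $\ell=n$ the claim is exactly the Stirling-type inequality $n!\ge(n/e)^n$, so the argument cannot afford to lose more than a constant factor per edge.
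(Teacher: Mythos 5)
Your proof is correct, and it takes a genuinely different route from the paper's. Both proofs begin identically by identifying the probability with $(n-\ell)!/n!$. The paper then proves $(n-\ell)!/n! \le (e/n)^\ell$ directly by a double induction on $\ell$ and $n$, pulling out one factor of $1/n$ at a time and absorbing the resulting $(1+\tfrac{1}{n-1})^{\ell-1}$ into the spare factor of $1/e$. You instead rewrite the falling factorial as $1/(\ell!\binom{n}{\ell})$ and appeal to the two standard non-asymptotic bounds $\ell! \ge (\ell/e)^\ell$ and $\binom{n}{\ell} \ge (n/\ell)^\ell$, whose product is exactly $(n/e)^\ell$. Both verifications of these bounds are sound: the factorial bound follows from truncating the series for $e^\ell$, and the binomial bound from checking $\tfrac{n-i}{\ell-i} \ge \tfrac{n}{\ell}$ termwise for $n \ge \ell$. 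Your route is arguably more transparent about \emph{why} the $e$ appears --- it is the Stirling-type loss in $\ell! \ge (\ell/e)^\ell$ --- and your closing remark that the case $\ell = n$ forces the bound to be exactly $n! \ge (n/e)^n$ is a nice sanity check that the estimate is tight in the regime that matters. The paper's inductive argument is slightly more self-contained in that it needs no named lemmas, but the two are comparable in length and both entirely elementary.
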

\begin{proof}
The probability corresponds to the fraction of perfect matchings of $K_{n,n}$ that contain $M$. This is equal to $(n-\ell)!/n!$ and we verify by induction on $\ell$ and $n$ that $(n-\ell)!/n! \leq (e/n)^\ell$.
    Note that this holds for $\ell = 1$ and all
    $n \in \mathbb{N}$.
    When $n \geq \ell \geq 2$ we obtain
    \begin{equation*}
        \frac{(n-\ell)!}{n!}
        =
        \frac{1}{n}
        \frac{(n-\ell)!}{(n-1)!}
        \leq
        \frac{1}{n}
        \left(
            \frac{e}{n-1}
        \right)^{\ell-1}
        =
        \left( \frac{e}{n} \right)^{\ell}
        \frac{1}{e}
        \left( 1 + \frac{1}{n-1} \right)^{\ell-1}
        \leq
        \left( \frac{e}{n} \right)^{\ell}.
        \qedhere
\end{equation*}
\end{proof}

\subsection{Proof of \texorpdfstring{\Cref{lem:dnfs-approximate-intro}}{Lemma \ref{lem:dnfs-approximate-intro}}}
\label{sec:proof-approx}

\paragraph{Notation.}
Let $C$ be a circuit with $\operatorname{size}(C)\leq 2^w$ computing $\PM$.
We say that 
$F_\calm$ has \emph{width~$k$}
if
every matching in $\calm$ has at most $k$ edges,
and
we say that 
$F_{\calm}$
is a
\emph{$(k,r)$-DNF}
if
$F_{\calm}$ is $r$-small and has width $k$.
We set $\eps \coloneqq n^{-3w}$.
Let $r(\ell, \eps)$ be such that any set $\calm$ of $\ell$-matchings of size
at least $r(\ell,\eps)^\ell$ contains an $\eps$-matching sunflower.
Let~$r \coloneqq \smash{\max_{\ell \in [2w]} r(\ell,\eps)}$.
From \Cref{lem:sunflower-bound}, we
obtain that $r \leq O(w \log^2(w/\eps)) \leq O(w^3 \log^2 n)$. We may assume here that $w^3 \log^2n \leq o(n)$ (so that $r\leq o(n)$) as otherwise the result is trivial.

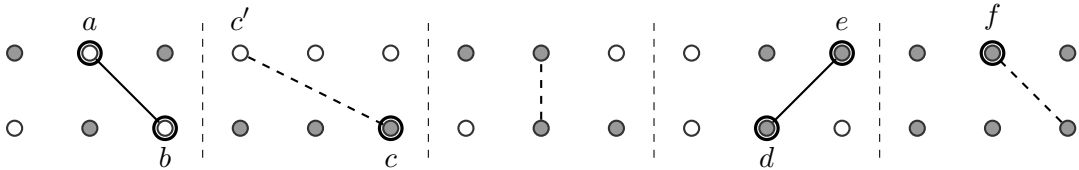
\begin{figure}[b]
    \centering
    \begin{tikzpicture}[scale=1.1]

    \def\colorlistup{{
            "black", "white", "black",
            "white", "white", "white",
            "black", "black", "white",
            "white", "black", "black",
            "black", "black", "black",
        }}
    
    \def\colorlistdown{{
            "white", "black", "white",
            "black", "black", "black",
            "white", "black", "black",
            "white", "black", "white",
            "black", "black", "black",
        }}

    \def\n{14}
    \def\size{0.2cm}

    \foreach \i in {0, 1, ..., \n}{
        \pgfmathparse{\colorlistup[\i]}
        \edef\col{\pgfmathresult}
        \node[graph-vert = {\col}{\size}, black!80, fill = \col!80] (u\i) at (\i, 1) {};
        \pgfmathparse{\colorlistdown[\i]}
        \edef\col{\pgfmathresult}
        \node[graph-vert = {\col}{\size}, black!80, fill = \col!80] (d\i) at (\i, 0) {};
    }

    \foreach[count = \i] \j / \pos in
        {u1/above, d2/below, d5/below, d10/below, u11/above, u13/above}{
        \draw[very thick] (\j) circle (0.15);
        \node[text height = 0.2cm, \pos = 5pt] at (\j) {$\numletter{\i}$};
    }
    \node[text height = 0.2cm, above = 5pt] at (u3) {$c'$};

    \foreach \i in {2.5, 5.5, 8.5, 11.5}{
        \draw[dashed] (\i, 1.4) -- ++(0, -1.8);        
    }

    \draw[thick] (u1) -- (d2);
    \draw[thick] (d10) -- (u11);

    \draw[thick, dashed] (d5) -- (u3);
    \draw[thick, dashed] (d7) -- (u7);
    \draw[thick, dashed] (d14) -- (u13);

\end{tikzpicture}
    \caption{
        Example of 
        an input $x\in\supp(\cald_0)$ (given by the black/white vertex colouring) that causes an error when plucking a sunflower $\calm'$ (from
        \cref{fig:m-sunflower}). Edges $ab$ and $de$ in the core $K=\bigcap\calm'$ are monochromatic, which means $t_K(x)=1$. However, the dashed edges represent a petal~$M\setminus K$, $M\in\calm'$, that contains a bichromatic edge $cc'$, which means $t_M(x)=0$.
    }
    \label{fig:plucking-error}
\end{figure}

\paragraph{Proof overview (via plucking).}
We will construct a $(w,r)$-DNF for $C$
gate-by-gate, inductively, starting at the input gates until we reach the
output gate.
Every input variable is already a~$(w,r)$-DNF.
Consider then an~$\lor$ gate. The challenge is that if we were to naively
combine our inductively constructed $(w,r)$-DNFs by $\lor$, the number of
terms might increase, potentially violating~$r$-smallness. For an~$\land$
gate, a naive combination would also increase the width from $w$ up to~$2w$. In order to maintain smallness of our DNF, we approximate the naive
combination 
by running
\cref{alg:pluck}. The following claim
summarises the properties of the resulting DNF.

\begin{algorithm}[H]
    \caption{Plucking procedure $\pluck(\calm)$}
        \label{alg:pluck}
    \begin{algorithmic}[1]
        \While{$\exists \ell\in [2w]\colon |\calm \cap \calp_\ell| > r^\ell$}
            \State
            Let $\calm' \subseteq \calm \cap \calp_\ell$ be an $\eps$-matching sunflower with core $K$
            \State
            Let
            $\calm \leftarrow (\calm \setminus \calm')\cup \{K\}$
        \EndWhile
    \end{algorithmic}
\end{algorithm}
\begin{claim}
    \label{claim:pluck}
    If $F_{\calm}$ has width $2w$, then
    $F_{\pluck(\calm)}$ is 
    a $(2w,r)$-DNF with $F_{\pluck(\calm)} \geq F_{\calm}$
    and
    \begin{equation}
    \label{eq:pluck}        
    \Pr_{\bm x \sim \cald_0}[F_{\pluck(\calm)}(\bm x) > F_{\calm}(\bm x)]
        \leq
        n^{-w}.
    \end{equation}
\end{claim}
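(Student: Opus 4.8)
The plan is to track the effect of one iteration of the while loop in $\pluck$ and then chain the iterations, proving the four parts of the claim — validity and termination, the $(2w,r)$-DNF conclusion, monotonicity, and the error bound — in that order. Write $\calm_0,\calm_1,\dots$ for the successive values of $\calm$, with $\calm_0$ the input. First I would check validity and termination: the loop guard supplies some $\ell\in[2w]$ with $|\calm_{i-1}\cap\calp_\ell|>r^\ell\ge r(\ell,\eps)^\ell$, so by the definition of $r(\ell,\eps)$ together with \cref{lem:sunflower-bound} the subfamily $\calm_{i-1}\cap\calp_\ell$ really does contain an $\eps$-matching sunflower, making line~2 well defined. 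An iteration deletes the sunflower $\calm'_i$ (at least two distinct $\ell$-matchings) and inserts only the single core $K_i=\bigcap\calm'_i$, which is a proper sub-matching of each member of $\calm'_i$ and hence has fewer than $\ell$ edges; so $|\calm_i|\le|\calm_{i-1}|-1$ and the loop halts after $T\le|\calm_0|$ iterations.

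Next I would verify that the output $\calm_T=\pluck(\calm)$ is a $(2w,r)$-DNF with $F_{\pluck(\calm)}\ge F_\calm$. The width never increases, since every inserted core is a subset of a matching already present; and when the loop exits the guard has failed, so $|\calm_T\cap\calp_\ell|\le r^\ell$ for all $\ell\in[2w]$ (the cases $\ell=0$ and $\ell>2w$ being immediate), i.e.\ $F_{\calm_T}$ is $r$-small. For monotonicity, a single iteration replaces $\bigvee_{M\in\calm'_i}t_M$ by $t_{K_i}$; since $K_i\subseteq M$ for each $M\in\calm'_i$, the term $t_{K_i}$ has fewer conjuncts than $t_M$, so $t_{K_i}\ge t_M$ pointwise, giving $t_{K_i}\ge\bigvee_{M\in\calm'_i}t_M$ and $F_{\calm_i}\ge F_{\calm_{i-1}}$. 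Chaining over all iterations gives $F_{\pluck(\calm)}\ge F_\calm$.

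For the error bound I would use that $F_{\calm_0}\le F_{\calm_1}\le\dots\le F_{\calm_T}$ pointwise, so $\{F_{\calm_T}(\bm x)>F_{\calm_0}(\bm x)\}\subseteq\bigcup_{i}\{F_{\calm_i}(\bm x)>F_{\calm_{i-1}}(\bm x)\}$, and then bound each term by $\eps$. If $F_{\calm_i}(x)=1$ but $F_{\calm_{i-1}}(x)=0$, then no term of $\calm_{i-1}$ accepts $x$ (in particular none of $\calm'_i\subseteq\calm_{i-1}$), so the acceptance by $F_{\calm_i}$ must come from the new term: $t_{K_i}(x)=1$ while $t_M(x)=0$ for every $M\in\calm'_i$. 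In terms of the colouring $\bm c$ underlying $\bm x\sim\cald_0$ (so $\bm x_e=1$ iff $e$ is monochromatic under $\bm c$): $t_{K_i}(\bm x)=1$ forces every edge of $K_i$ monochromatic, hence — using $K_i\subseteq M$ — $t_M(\bm x)=0$ forces some edge of $M\setminus K_i$ bichromatic. Thus the step-$i$ error event lies inside $\{\,\forall M\in\calm'_i\ \exists e\in M\setminus K_i:\ e\text{ is bichromatic under }\bm c\,\}$, the complement of the event in the $\eps$-matching sunflower condition \eqref{eq:match-sunflower} for $\calm'_i$ (whose core is exactly $K_i=\bigcap\calm'_i$, and whose colouring is the same $\bm c$); so it has $\cald_0$-probability at most $\eps$. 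A union bound then gives total error $\le T\eps$, which with $\eps=n^{-3w}$ is $\le n^{-w}$ as soon as $T\le n^{2w}$.

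The main obstacle is precisely that last requirement: the generic bound $T\le|\calm_0|$ is far too weak for an arbitrary width-$2w$ family, so $T\le n^{2w}$ must be obtained from the fact that $\pluck$ is only ever applied to a family $\calm_0$ formed by $\lor$- or $\land$-combining two $(w,r)$-DNFs, for which $|\calm_0|\le(\sum_{\ell=0}^{w}r^\ell)^2\le n^{2w}$ using the standing assumption $r\le o(n)$. Everything else above is routine bookkeeping once the two single-step facts — a pluck keeps the DNF pointwise above, and costs at most $\eps$ in $\cald_0$-error — have been pinned down.
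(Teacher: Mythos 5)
Your proof follows the same route as the paper's: show the procedure is well defined, bound the number of iterations, bound the per-iteration error by $\eps$ via the matching-sunflower property, and union-bound. The single-step analysis — monotonicity because $K_i\subseteq M$ gives $t_{K_i}\ge t_M$ pointwise, and the error event at step $i$ sits inside the complement of the event in \eqref{eq:match-sunflower} for $\calm'_i$ because $t_{K_i}(x)=1$ together with $K_i\subseteq M$ forces the bichromatic edge into $M\setminus K_i$ — matches the paper's argument exactly.

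Where you go beyond the paper's terse write-up is in scrutinizing the bound $T\le n^{2w}$ on the iteration count. The paper simply asserts ``the algorithm will terminate in at most $|\calm|\le n^{2w}$ iterations,'' but as you note, the width-$2w$ hypothesis alone does not give this: the number of matchings of size at most $2w$ in $K_{n,n}$ is about $n^{4w}/(2w)!$, which exceeds $n^{2w}$ throughout the relevant regime (e.g.\ $w\approx n^{1/3}$), so a width-$2w$ family could a priori be far larger than $n^{2w}$ and the union bound would not close. Your fix — that $\pluck$ is only ever invoked on the $\lor$- or $\land$-combination of two already-constructed $(w,r)$-DNFs, so $|\calm_0|\le\bigl(\sum_{\ell\le w}r^\ell\bigr)^2\le n^{2w}$ under the standing assumption $r=o(n)$ — is exactly the justification the paper leaves implicit, and is worth stating. (An equivalent repair is to add $|\calm|\le n^{2w}$ as an explicit hypothesis of the claim.) Everything else in the proposal checks out.
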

\begin{proof}
First note that Line 2 of the algorithm is always possible as $r \geq
r(\ell,\eps)$ for all $\ell\in[2w]$. Also note that in Line 3, the size of
the family $\calm$ decreases by at least one. This means that the algorithm will
terminate in at most $|\calm|\leq n^{2w}$ iterations. Let us then verify
\eqref{eq:pluck} by calculating the errors incurred in Line 3. An error
occurs for input $x$ only if $t_K(x)=1$ but $t_M(x)=0$ for all~$M\in\calm'$ (see
\cref{fig:plucking-error}). This can occur only if $t_{M\setminus K}(x)=0$
for all $M\in\calm'$. But the $\eps$-matching sunflower
property~\eqref{eq:match-sunflower} for~$\calm'$ implies that this happens
only with probability at most $\eps$ over~$\bm x\sim\cald_0$. A union bound
over all iterations shows that \eqref{eq:pluck} is at most $n^{2w}\eps=n^{2w}n^{-3w}=n^{-w}$.
\end{proof}

\paragraph{Approximation.}
Suppose 
that we have inductively constructed $(w,r)$-DNFs $F_{\calm}$, $F_{\calm'}$ for two gates that feed into a gate $g$ that computes a binary operation $\circ\in\{\lor,\land\}$. Our goal is to find a~$(w,r)$-DNF $F_{\calg}$ that approximates $F_{\calm}\circ F_{\calm'}$ with  tiny error:
\begin{align}
\textstyle \Pr_{\bm x \sim\cald_0}[F_\calg(\bm x) > (F_{\calm}\circ F_{\calm'})(\bm x)] &~\leq~ 2^{-\omega(w)}, \label{eq:0} \\
\textstyle \Pr_{\bm x \sim\cald_1}[F_\calg(\bm x) < (F_{\calm}\circ F_{\calm'})(\bm x)] &~\leq~ 2^{-\omega(w)}. \label{eq:1} 
\end{align}

If $\circ=\lor$, we set $\calg\coloneqq\pluck(\calm \cup \calm')$. To analyse this, we note that plucking incurs $n^{-w}$ errors on $\cald_0$ by \cref{claim:pluck}, which verifies \eqref{eq:0}. On the other hand, plucking introduces no errors on $\cald_1$, verifying~\eqref{eq:1}. If~$\circ=\land$, we first approximate
\begin{equation*}
    \calg' \coloneqq 
    \pluck
    \left(
    \{ M \cup M' : M \in \calm,\, M' \in \calm',\,
    M \cup M' \text{ is a matching}
    \}
\right),
\end{equation*}
and then delete all matchings of size larger than $w$ from $\calg'$; call the resulting family $\calg$.
To analyse this, we note that plucking incurs $n^{-w}$ errors on $\cald_0$
and no errors on $\cald_1$ (we only omitted terms that were not matchings).
Moreover,
deleting wide terms incurs no error on $\cald_0$, 
and the errors on~$\cald_1$ can be bounded as follows:
\begin{equation*}
\Pr_{\bm x\sim\cald_1}[F_{\calg}(\bm x) < F_{\calg'}(\bm x)]
\leq
    \sum_{\ell = w}^{2w} 
    \sum_{M\in\calg'\cap\calp_\ell}\Pr_{\bm x\sim\cald_1}[t_M(\bm x)=1] 
    \stackrel{\text{(\cref{claim:match})}}{\leq}
    \sum_{\ell = w}^{2w} 
    r^\ell \left( e/n \right)^\ell 
    \leq
    \sum_{\ell = w}^{\infty} 
    o(1)^\ell
    \leq
    2^{-\omega(w)}.
\end{equation*}
This verifies \eqref{eq:0}--\eqref{eq:1}. We now conclude the proof by observing that the DNF $F$ for the output gate has tiny overall error, by summing up all the individual contributions in \crefrange{eq:0}{eq:1}:
\begin{align*}
    \Pr_{\bm x \sim \cald}[F(\bm x) \neq \PM(\bm x)]
    =
    {\textstyle\frac{1}{2}}
    \Pr_{\bm x \sim \cald_1}[F(\bm x)<C(\bm x)] 
    + 
    {\textstyle\frac{1}{2}}
    \Pr_{\bm x \sim \cald_0}[F(\bm x)>C(\bm x)] 
    \le 
    \operatorname{size}(C)\cdot 2^{-\omega(w)}
    \le 
    o(1).
\end{align*}

\subsection{Proofs of \texorpdfstring{\Crefrange{thm:l}{thm:ac0}}{Theorems \ref{thm:l} and \ref{thm:ac0}}}
\label{s:grigni-sipser}

To prove \cref{thm:l} we note that the above proof only ever
assumed that the circuit~$C$ computes $\PM$ correctly on the support of
$\cald$. But $\PM(x)=\Odd(x)$ for all $x\in\operatorname{supp}(\cald)$, and
hence the lower bound also applies to $\Odd$.

To prove \Cref{thm:ac0} we apply a standard padding argument and a folklore
depth-reduction result. Indeed, it is known that any function in $\L$ can
be computed by an $\AC^0$-circuit of size $2^{n^\eps}$, where we can take
$\eps>0$ as any fixed constant~\cite[Lemma 8.1]{Allender2008}. Let~$\eps
\coloneqq 1/(4(k+1))$ and define the padded function $f_k\colon \{0,1\}^N \to
\{0,1\}$ by $f_k(x,y) \coloneqq \Odd_n(x)$, where $N \coloneqq 2^{n^\eps}$ and~$\Odd_n$ is odd factor on $n$-vertex graphs. It follows that $f_k$ can be
computed by an~$\AC^0$ circuit of size~$2^{n^\eps} = N$, and its monotone
complexity is at least $\exp(n^{1/3-o(1)}) \geq N^{\Omega(\log^k N)}$.

\section{Discussion}

Let us make some final comments about our proof.
First, quantitatively improved lower bounds for bipartite matching follow immediately from improved bounds on matching sunflowers.
Indeed, our proof in \cref{sec:approximation} shows more generally that monotone circuits of size $O(n/r_w)^w$
can be approximated by $O(r_w)$-small DNFs where $r_w \coloneqq \max_{\ell
\in [2w]} r(\ell,n^{-3w})$. This implies the following closed-form
expression for the lower bound on the monotone complexity of bipartite
matching (also derivable from the ``abstract sunflowers'' of~\cite{Cavalar2020a}):
\begin{equation*}
    \max_{w \in [n]}
    \Omega(n/ r_w)^w.
\end{equation*}
For example, plugging in Razborov's~\cite{Razborov1985} bound  $r(\ell,\eps) \leq (2^\ell \ell \log(1/\eps))^{2\ell}$ would recover his~$n^{\Omega(\log n)}$ lower bound.
For another example, instead of using the optimised bounds for robust
sunflowers~\cite{Alweiss2021,Rao2020,bcw21} in our
\cref{lem:matching-sunflowers}, we could plug in an earlier bound of
Rossman~\cite{Rossman2014}. 
Using Rossman's bound  would already yield an
exponential lower bound for bipartite matching, albeit with a constant smaller than $1/3$ in the exponent.

We also note that our proof extends to other distributions (and functions) than the odd cut distribution $\cald_0$ in \cref{def:odd-cut}. For example, we could generate a bipartite graph out of a random vertex colouring $\bm c$ whose number of 1s on opposite sides of the graph differ by 1. As remarked in~\cite{Feder1998}, these are rejecting inputs for the \emph{$\mathbb{Z}_q$-satisfiability} problem (satisfiability of systems of linear equations modulo $q$). This recovers the exponential monotone circuit lower bound for~$\mathbb{Z}_q$-satisfiability first proved in~\cite{Garg2020,Goos2019}.

Finally, we can ensure in~\cref{thm:ac0} that the $\AC^0$ functions are graph properties (functions that output the same value on isomorphic graphs) by applying a more sophisticated padding argument from~\cite[Lemma
3.6]{Cavalar2023}. This contrasts with results of~\cite{Rossman2008,Rossman2017},
stating that  \emph{homomorphism-preserved} graph properties in $\AC^0$ 
can also be computed by small monotone DNFs.

\bigskip

\subsection*{Acknowledgements}
We thank Susanna de Rezende, Shuo Pang, Jonas Conneryd, and Noel Arteche for
discussions. 
We also thank 
Felix Moreno Peñarrubia and Anup Rao 
for suggestions that simplified and improved the presentation of the paper.
B.C.\ acknowledges support of Royal Society University Research Fellowship URF$\backslash$R1$\backslash$211106
and
EPSRC project EP/Z534158/1 on ``Integrated Approach to Computational Complexity: Structure, Self-Reference and Lower Bounds''. M.G., A.R., A.S., and D.S.\ are supported by the Swiss State
Secretariat for Education, Research, and Innovation (SERI) under contract number
MB22.00026.

\bigskip

%--------------------------

% To adjust bibliography spacing:
\let\OLDthebibliography\thebibliography
\renewcommand\thebibliography[1]{
   \OLDthebibliography{#1}
   \setlength{\parskip}{3pt}
}

\small
\DeclareUrlCommand{\Doi}{\urlstyle{sf}}
\renewcommand{\path}[1]{\footnotesize\Doi{#1}}
\renewcommand{\url}[1]{\href{#1}{\small\Doi{#1}}}
\bibliographystyle{alphaurl}
\bibliography{references.bib}

\newcommand{\etalchar}[1]{$^{#1}$}
\begin{thebibliography}{LMM{\etalchar{+}}22}

\bibitem[AB87]{Alon1987}
Noga Alon and Ravi Boppana.
\newblock The monotone circuit complexity of boolean functions.
\newblock {\em Combinatorica}, 7(1):1--22, 1987.
\newblock \href {https://doi.org/10.1007/BF02579196}
  {\path{doi:10.1007/BF02579196}}.

\bibitem[AB09]{Arora2009}
Sanjeev Arora and Boaz Barak.
\newblock {\em Computational Complexity: A Modern Approach}.
\newblock Cambridge University Press, 2009.
\newblock \href {https://doi.org/10.1017/cbo9780511804090}
  {\path{doi:10.1017/cbo9780511804090}}.

\bibitem[AG87]{Ajtai1987}
Mikl\'{o}s Ajtai and Yuri Gurevich.
\newblock Monotone versus positive.
\newblock {\em Journal of the ACM}, 34(4):1004--1015, 1987.
\newblock \href {https://doi.org/10.1145/31846.31852}
  {\path{doi:10.1145/31846.31852}}.

\bibitem[AHM{\etalchar{+}}08]{Allender2008}
Eric Allender, Lisa Hellerstein, Paul McCabe, Toniann Pitassi, and Michael
  Saks.
\newblock Minimizing disjunctive normal form formulas and {AC}$^0$ circuits
  given a truth table.
\newblock {\em SIAM Journal on Computing}, 38(1):63--84, 2008.
\newblock \href {https://doi.org/10.1137/060664537}
  {\path{doi:10.1137/060664537}}.

\bibitem[ALWZ21]{Alweiss2021}
Ryan Alweiss, Shachar Lovett, Kewen Wu, and Jiapeng Zhang.
\newblock Improved bounds for the sunflower lemma.
\newblock {\em Annals of Mathematics}, 194(3), 2021.
\newblock \href {https://doi.org/10.4007/annals.2021.194.3.5}
  {\path{doi:10.4007/annals.2021.194.3.5}}.

\bibitem[And87]{Andreev1987}
Alexander Andreev.
\newblock A method for obtaining efficient lower bounds for monotone
  complexity.
\newblock {\em Algebra and Logic}, 26(1):1--18, 1987.
\newblock \href {https://doi.org/10.1007/bf01978380}
  {\path{doi:10.1007/bf01978380}}.

\bibitem[ARZ99]{Allender1999}
Eric Allender, Klaus Reinhardt, and Shiyu Zhou.
\newblock Isolation, matching, and counting uniform and nonuniform upper
  bounds.
\newblock {\em Journal of Computer and Systems Sciences}, 59(2):164--181, 1999.
\newblock \href {https://doi.org/10.1006/JCSS.1999.1646}
  {\path{doi:10.1006/JCSS.1999.1646}}.

\bibitem[BCW21]{bcw21}
Tolson Bell, Suchakree Chueluecha, and Lutz Warnke.
\newblock Note on sunflowers.
\newblock {\em Discrete Mathematics}, 344(7):112367, 2021.
\newblock \href {https://doi.org/10.1016/j.disc.2021.112367}
  {\path{doi:10.1016/j.disc.2021.112367}}.

\bibitem[BGW99]{Babai1999}
László Babai, Anna Gál, and Avi Wigderson.
\newblock Superpolynomial lower bounds for monotone span programs.
\newblock {\em Combinatorica}, 19(3):301--319, 1999.
\newblock \href {https://doi.org/10.1007/s004930050058}
  {\path{doi:10.1007/s004930050058}}.

\bibitem[BM25]{Blasiok2025}
Jaros\l{}aw B\l{}asiok and Linus Meierh{\"{o}}fer.
\newblock Hardness of clique approximation for monotone circuits.
\newblock In {\em Proceedings of the 40th Computational Complexity Conference
  (CCC)}, LIPIcs. Schloss Dagstuhl, 2025.
\newblock \href {https://doi.org/10.48550/ARXIV.2501.09545}
  {\path{doi:10.48550/ARXIV.2501.09545}}.

\bibitem[Cav20]{Cavalar2020a}
Bruno Cavalar.
\newblock Sunflower theorems in computational complexity.
\newblock Master's dissertation, Instituto de Matemática e Estatística,
  University of São Paulo, 2020.
\newblock \href {https://doi.org/10.11606/D.45.2020.tde-25112020-162107}
  {\path{doi:10.11606/D.45.2020.tde-25112020-162107}}.

\bibitem[CDM21]{Chat21}
Arkadev Chattopadhyay, Rajit Datta, and Partha Mukhopadhyay.
\newblock Lower bounds for monotone arithmetic circuits via communication
  complexity.
\newblock In {\em Proceedings of 53rd Symposium on Theory of Computing (STOC)},
  pages 786--799. ACM, 2021.
\newblock \href {https://doi.org/10.1145/3406325.3451069}
  {\path{doi:10.1145/3406325.3451069}}.

\bibitem[CHO{\etalchar{+}}22]{Chen2022}
Lijie Chen, Shuichi Hirahara, Igor Oliveira, J{\'a}n Pich, Ninad Rajgopal, and
  Rahul Santhanam.
\newblock Beyond natural proofs: Hardness magnification and locality.
\newblock {\em Journal of the ACM}, 69(4):1--49, 2022.
\newblock \href {https://doi.org/10.1145/3538391} {\path{doi:10.1145/3538391}}.

\bibitem[CKR22]{Cavalar2020}
Bruno Cavalar, Mrinal Kumar, and Benjamin Rossman.
\newblock Monotone circuit lower bounds from robust sunflowers.
\newblock {\em Algorithmica}, 84(12):3655--3685, 2022.
\newblock \href {https://doi.org/10.1007/S00453-022-01000-3}
  {\path{doi:10.1007/S00453-022-01000-3}}.

\bibitem[CO23]{Cavalar2023}
Bruno Cavalar and Igor Oliveira.
\newblock Constant-depth circuits vs. monotone circuits.
\newblock In {\em Proceedings of the 38h Computational Complexity Conference
  (CCC)}, volume 264 of {\em LIPIcs}, pages 29:1--29:37. Schloss Dagstuhl,
  2023.
\newblock \href {https://doi.org/10.4230/LIPIcs.CCC.2023.29}
  {\path{doi:10.4230/LIPIcs.CCC.2023.29}}.

\bibitem[COS17]{Chen2017}
Xi~Chen, Igor Oliveira, and Rocco Servedio.
\newblock Addition is exponentially harder than counting for shallow monotone
  circuits.
\newblock In {\em Proceedings of 49th Symposium on Theory of Computing (STOC)},
  pages 1232--1245. ACM, 2017.
\newblock \href {https://doi.org/10.1145/3055399.3055425}
  {\path{doi:10.1145/3055399.3055425}}.

\bibitem[Dam90]{Damm1990}
Carsten Damm.
\newblock Problems complete for $\oplus${L}.
\newblock {\em Information Processing Letters}, 36(5):247--250, 1990.
\newblock \href {https://doi.org/10.1016/0020-0190(90)90150-v}
  {\path{doi:10.1016/0020-0190(90)90150-v}}.

\bibitem[dRV25]{Rezende2025}
Susanna de~Rezende and Marc Vinyals.
\newblock Lifting with colourful sunflowers.
\newblock In {\em Proceedings of the 40th Computational Complexity Conference
  (CCC)}, LIPIcs. Schloss Dagstuhl, 2025.

\bibitem[FV98]{Feder1998}
Tom{\'{a}}s Feder and Moshe Vardi.
\newblock The computational structure of monotone monadic {SNP} and constraint
  satisfaction: {A} study through datalog and group theory.
\newblock {\em SIAM Journal on Computing}, 28(1):57--104, 1998.
\newblock \href {https://doi.org/10.1137/S0097539794266766}
  {\path{doi:10.1137/S0097539794266766}}.

\bibitem[GGKS20]{Garg2020}
Ankit Garg, Mika G{\"o}{\"o}s, Pritish Kamath, and Dmitry Sokolov.
\newblock Monotone circuit lower bounds from resolution.
\newblock {\em Theory of Computing}, 16(1):1--30, 2020.
\newblock \href {https://doi.org/10.4086/toc.2020.v016a013}
  {\path{doi:10.4086/toc.2020.v016a013}}.

\bibitem[GKRS19]{Goos2019}
Mika G\"{o}\"{o}s, Pritish Kamath, Robert Robere, and Dmitry Sokolov.
\newblock Adventures in monotone complexity and {TFNP}.
\newblock In {\em Proceedings of 10th Innovations in Theoretical Computer
  Science Conference (ITCS)}, volume 124 of {\em LIPIcs}, pages 38:1--38:19.
  Schloss Dagstuhl, 2019.
\newblock \href {https://doi.org/10.4230/LIPIcs.ITCS.2019.38}
  {\path{doi:10.4230/LIPIcs.ITCS.2019.38}}.

\bibitem[GS92]{Grigni1992}
Michelangelo Grigni and Michael Sipser.
\newblock Monotone complexity.
\newblock In {\em Boolean Function Complexity}, London Mathematical Society
  Lecture Note Series, pages 57--75. Cambridge University Press, 1992.
\newblock \href {https://doi.org/10.1017/CBO9780511526633.006}
  {\path{doi:10.1017/CBO9780511526633.006}}.

\bibitem[HR00]{Harnik2000}
Danny Harnik and Ran Raz.
\newblock Higher lower bounds on monotone size.
\newblock In {\em Proceedings of 32nd Symposium on Theory of Computing (STOC)},
  pages 378--387. ACM, 2000.
\newblock \href {https://doi.org/10.1145/335305.335349}
  {\path{doi:10.1145/335305.335349}}.

\bibitem[Juk99]{Jukna1999}
Stasys Jukna.
\newblock Combinatorics of monotone computations.
\newblock {\em Combinatorica}, 19(1):65--85, 1999.
\newblock \href {https://doi.org/10.1007/s004930050046}
  {\path{doi:10.1007/s004930050046}}.

\bibitem[Juk12]{Jukna2012}
Stasys Jukna.
\newblock {\em Boolean function complexity}, volume~27 of {\em Algorithms and
  Combinatorics}.
\newblock Springer, 2012.
\newblock Advances and frontiers.
\newblock \href {https://doi.org/10.1007/978-3-642-24508-4}
  {\path{doi:10.1007/978-3-642-24508-4}}.

\bibitem[LMM{\etalchar{+}}22]{Lovett2022}
Shachar Lovett, Raghu Meka, Ian Mertz, Toniann Pitassi, and Jiapeng Zhang.
\newblock Lifting with sunflowers.
\newblock In {\em Proceedings of 13th Innovations in Theoretical Computer
  Science Conference (ITCS)}. Schloss Dagstuhl, 2022.
\newblock \href {https://doi.org/10.4230/LIPICS.ITCS.2022.104}
  {\path{doi:10.4230/LIPICS.ITCS.2022.104}}.

\bibitem[Lov79]{Lovasz1979}
L{\'{a}}szl{\'{o}} Lov{\'{a}}sz.
\newblock On determinants, matchings, and random algorithms.
\newblock In {\em Proceedings of Fundamentals of Computation Theory (FCT)},
  pages 565--574. Akademie-Verlag, Berlin, 1979.

\bibitem[Mul87]{Mulmuley1987}
Ketan Mulmuley.
\newblock A fast parallel algorithm to compute the rank of a matrix over an
  arbitrary field.
\newblock {\em Combinatorica}, 7(1):101--104, 1987.
\newblock \href {https://doi.org/10.1007/BF02579205}
  {\path{doi:10.1007/BF02579205}}.

\bibitem[Oko82]{Okolnishnikova1982}
Elizaveta Okol'nishnikova.
\newblock The effect of negations on the complexity of realization of monotone
  {B}oolean functions by formulas of bounded depth.
\newblock {\em Metody Diskretnogo Analiza}, (38):74--80, 1982.

\bibitem[PR17]{Pitassi2017}
Toniann Pitassi and Robert Robere.
\newblock Strongly exponential lower bounds for monotone computation.
\newblock In {\em Proceedings of 49th Symposium on Theory of Computing (STOC)},
  pages 1246--1255. ACM, 2017.
\newblock \href {https://doi.org/10.1145/3055399.3055478}
  {\path{doi:10.1145/3055399.3055478}}.

\bibitem[Rao20]{Rao2020}
Anup Rao.
\newblock Coding for sunflowers.
\newblock {\em Discrete Analysis}, 2020(2), 2020.
\newblock \href {https://doi.org/10.19086/da.11887}
  {\path{doi:10.19086/da.11887}}.

\bibitem[Raz85a]{Razborov1985}
Alexander Razborov.
\newblock Lower bounds of monotone complexity of the logical permanent
  function.
\newblock {\em Matematicheskie Zametki}, 37(6):887--900, 942, 1985.
\newblock \href {https://doi.org/10.1007/BF01157687}
  {\path{doi:10.1007/BF01157687}}.

\bibitem[Raz85b]{Razborov1985a}
Alexander Razborov.
\newblock Lower bounds on the monotone complexity of some {B}oolean functions.
\newblock {\em Doklady Akademii Nauk SSSR}, 281(4):798--801, 1985.

\bibitem[Rei08]{Reingold2008}
Omer Reingold.
\newblock Undirected connectivity in log-space.
\newblock {\em Journal of the ACM}, 55(4):1--24, 2008.
\newblock \href {https://doi.org/10.1145/1391289.1391291}
  {\path{doi:10.1145/1391289.1391291}}.

\bibitem[Ros08]{Rossman2008}
Benjamin Rossman.
\newblock Homomorphism preservation theorems.
\newblock {\em Journal of the ACM}, 55(3):1--53, 2008.
\newblock \href {https://doi.org/10.1145/1379759.1379763}
  {\path{doi:10.1145/1379759.1379763}}.

\bibitem[Ros14]{Rossman2014}
Benjamin Rossman.
\newblock The monotone complexity of {$k$}-clique on random graphs.
\newblock {\em SIAM Journal on Computing}, 43(1):256--279, 2014.
\newblock \href {https://doi.org/10.1137/110839059}
  {\path{doi:10.1137/110839059}}.

\bibitem[Ros17]{Rossman2017}
Benjamin Rossman.
\newblock An improved homomorphism preservation theorem from lower bounds in
  circuit complexity.
\newblock In {\em Proceedings of 8th Innovations in Theoretical Computer
  Science Conference (ITCS)}, LIPIcs, pages 27:1--27:17. Schloss Dagstuhl,
  2017.
\newblock \href {https://doi.org/10.4230/LIPIcs.ITCS.2017.27}
  {\path{doi:10.4230/LIPIcs.ITCS.2017.27}}.

\bibitem[RW92]{Raz1992}
Ran Raz and Avi Wigderson.
\newblock Monotone circuits for matching require linear depth.
\newblock {\em Journal of the ACM}, 39(3):736--744, 1992.
\newblock \href {https://doi.org/10.1145/146637.146684}
  {\path{doi:10.1145/146637.146684}}.

\bibitem[Tar88]{Tardos1988}
\'{E}va Tardos.
\newblock The gap between monotone and nonmonotone circuit complexity is
  exponential.
\newblock {\em Combinatorica}, 8(1):141--142, 1988.
\newblock \href {https://doi.org/10.1007/BF02122563}
  {\path{doi:10.1007/BF02122563}}.

\bibitem[Wat25]{Watson2025}
Thomas Watson.
\newblock Complexity in computer science.
\newblock Book draft, 2025.

\bibitem[Weg87]{Wegener1987}
Ingo Wegener.
\newblock {\em The complexity of Boolean functions}.
\newblock Wiley-Teubner, 1987.
\newblock URL:
  \url{https://eccc.weizmann.ac.il/static/books/The_Complexity_of_Boolean_Functions/}.

\bibitem[Wig19]{Wigderson2019}
Avi Wigderson.
\newblock {\em Mathematics and Computation: A Theory Revolutionizing Technology
  and Science}.
\newblock Princeton University Press, 2019.
\newblock \href {https://doi.org/10.2307/j.ctvckq7xb}
  {\path{doi:10.2307/j.ctvckq7xb}}.

\end{thebibliography}
 
\end{document}